\begin{document}

\title{ACORN: Network Control Plane Abstraction using Route Nondeterminism}
\author{Divya Raghunathan\inst{1} \and
Ryan Beckett\inst{2} \and
Aarti Gupta\inst{1} \and
David Walker\inst{1}}

\institute{
Princeton University \and
Microsoft Research}
\maketitle 

\begin{abstract}
Networks are hard to configure correctly, and misconfigurations occur frequently, leading to outages or security breaches. Formal verification techniques have been applied to
guarantee the correctness of network configurations, thereby improving network reliability. This work addresses verification of distributed network control planes, with two distinct contributions to improve the scalability of formal verification. Our first contribution is a hierarchy of abstractions of varying precision which introduce nondeterminism into the route selection procedure that routers use to select the best available route. We prove the soundness of these abstractions and show their benefits. Our second contribution is a novel SMT encoding which uses symbolic graphs to encode all possible stable routing trees that are compliant with the given network control plane configurations. We have implemented our abstractions and SMT encodings in a prototype tool called \sysname. Our evaluations show that our abstractions can provide significant relative speedups (up to 323x) in performance, and \sysname can scale up to $\approx37,000$ routers (organized in FatTree topologies, with synthesized shortest-path routing and valley-free policies) for verifying reachability. This far exceeds the performance of existing tools for control plane verification.

\keywords{Network verification  \and Control plane abstraction \and SMT-based verification.}
\end{abstract}

\section{Introduction}
\label{s:intro}
Bugs in configuring networks can lead to expensive outages or critical security breaches, and misconfigurations occur frequently~\cite{fastly,cloudflare,time-warner,twitter,bank,united}. Thus, there has been great interest in formal verification of computer network configurations. Many initial efforts targeted the network \emph{data plane}, \ie the forwarding rules in each router that determine how a given packet is forwarded 
to a destination. 
Many of these methods 
have been successfully applied in large data centers in practice~\cite{rcdc,jingjing,google-libra}. 
In comparison, formal verification of the network \emph{control plane} is more challenging. 

Traditional control planes use distributed protocols such as OSPF, BGP, and RIP~\cite{nwbook} to compute a network data plane based on the route announcements received from peer networks, the current failures detected, and the policy in configurations set by network administrators. In control plane verification,
one must check that \emph{all} data planes that emerge due to the router configurations are correct.
There has been much recent progress in control plane verification. 
For small-sized networks, fully symbolic SMT-based verifiers~\cite{minesweeper,bagpipe,nvPldi20} usually work well and support a broad range of properties. For medium-to-large networks, SMT-based verifiers have not been shown to scale well. Instead, simulation-based verifiers~\cite{batfish,fastplane,shapeshifter,nvPldi20,plankton,hoyan} work much better, and many use additional symbolic analyses in limited scenarios (\eg failures). However, in general, they do not provide full symbolic reasoning, \eg for considering all external route announcements.

Our work is motivated by this gap: we aim to provide full symbolic reasoning \emph{and} improve the scalability of verification.
We address this challenge with two main contributions -- a novel hierarchy of control plane abstractions, and a new symbolic graph-based SMT encoding for control plane verification.

\para{Hierarchy of nondeterministic abstractions} 
Our novel control plane abstractions introduce nondeterminism in the procedure that routers use to select a route  
-- we call these the \emph{Nondeterministic Routing Choice (NRC) abstractions}. Instead of forcing a router to pick the \emph{best} available route, we allow it to \emph{nondeterministically choose} a route from a subset of available routes which includes the best route. 
The number of non-best routes in this set determines the precision of the abstraction; our least precise abstraction corresponds to picking \emph{any} available route.
We formalize these abstractions using the \emph{Stable Routing Problem (SRP)} model~\cite{minesweeper,bonsai}, which can model a wide variety of distributed routing protocols.
Although some other efforts~\cite{bagpipe,lightyear} have also proposed to abstract the decision process in BGP (details in \S\ref{s:related}), 
we elucidate and study the general principle, prove it sound, and reveal a range of precision-cost tradeoffs.

Our main insight here is that 
the determination of a best route 
is not essential for verification of many correctness properties such as reachability (\eg when the number of hops may not matter). For such properties, it is often adequate to abstract the route selection procedure that computes the best route. Furthermore, for policy-based properties, routes that must not be taken are usually explicitly prohibited by network configurations via route filters, which we model accurately.
We show that our proposed abstractions are \emph{sound}
for verification of properties that must hold over all stable network states,
\ie they will not miss any property violations.

Although our abstractions are sound for verification under specified failures (\S\ref{s:verif_nrc}), in this paper we focus on symbolic verification using the SRP model without failures, which provides an important core capability. Our ideas can be combined with other prior techniques~\cite{minesweeper,hoyan,origami,nvPldi20} to consider at-most-k failures, which is outside the scope of this paper.

The potential downside of considering non-best routes 
is that our abstractions may lead to false positives, \ie we could report property violations although the \emph{best} route may actually satisfy the property. In such cases, we propose 
using a more precise abstraction that models more of the route selection procedure.
Our experiments (\S\ref{s:eval}) demonstrate that the \nrcs
can successfully verify a wide range of networks and common policies, and offer significant performance and scalability benefits in symbolic SMT-based verification.

\para{Symbolic graph-based SMT encoding} 
Our novel SMT encoding uses \emph{symbolic graphs}~\cite{monosat}, where 
Boolean variables are associated with each edge in the network topology, to model the stable routing trees that emerge from the network control plane. Our encoding can leverage specialized SMT solvers such as MonoSAT~\cite{monosat} that provide support for graph-based reasoning, in addition to standard SMT solvers such as Z3~\cite{z3}.

 \para{Experimental evaluation} 
 We have implemented our \nrcs and symbolic graph-based SMT encoding in a prototype tool called \sysname, and present a detailed evaluation on benchmark examples that include synthetic data center examples with FatTree topologies~\cite{fattree}, as well as real topologies from Topology Zoo~\cite{topozoo} and BGPStream~\cite{bgpstream}, running well-known network policies. These benchmarks, including some new examples that we created, are publicly available~\cite{benchmark-repo}.
 
 Our evaluations show that \sysname can verify reachability in large FatTree benchmarks with around $37,000$ nodes (running 
common policies) within an hour. 
This kind of scalability is needed in modern data centers with several thousands of routers that run distributed routing protocols such as BGP~\cite{bgp-datacenter}.
\emph{To the best of our knowledge, no other control plane verifier has shown the correctness of benchmarks of such large sizes.}

We also performed controlled experiments to evaluate the effectiveness of our abstractions. All benchmark examples were successfully verified using an NRC abstraction: 96\% of examples could be verified using our least precise abstraction, and the remaining 4\% were verified 
 using a more precise abstraction. 
For verifying reachability,
we can achieve relative speedups (compared with no abstraction) of up to 323x, and can scale to much larger networks as compared with no abstraction (our tool cannot scale beyond 4,500 routers without abstraction, but can verify a network with 36,980 routers using our least precise NRC abstraction). We compared \sysname with two publicly available state-of-the-art control plane verifiers on the FatTree benchmarks -- our results show that ShapeShifter~\cite{shapeshifter} and NV~\cite{nvPldi20} do not scale beyond 3,000 routers for verifying reachability (\S\ref{s:compare}).

To summarize, we make the following contributions:
\begin{enumerate}
\item We present a hierarchy of novel control plane abstractions (\S\ref{s:treeabs}), called the \nrcs, that introduce nondeterminism into a general route selection procedure.
We prove that our abstractions are sound
for verification, 
and empirically show that they enable a precision-cost tradeoff. Although our focus is on SMT-based verification, our  abstractions can be used with other methods as well.
\item We present a novel 
SMT encoding (\S\ref{s:symgraph}) (based on symbolic graphs~\cite{monosat}) to capture distributed control plane behavior which leverages specialized SMT solvers that support graph-based reasoning as well as standard SMT solvers.
\item  We have implemented our abstractions and SMT encoding in a prototype tool called \sysname, and present a detailed
evaluation (\S\ref{s:eval}) on benchmark examples~\cite{benchmark-repo} including synthetic data center networks~\cite{fattree} and real-world topologies from Topology Zoo~\cite{topozoo} and BGPStream~\cite{bgpstream}, running well-known network policies. 
\end{enumerate}

\section{Motivating Examples}
\label{s:motiv}
In a distributed routing protocol, routers exchange \emph{route announcements} containing information on how to reach various destinations. On receiving a route announcement, a router updates its internal state 
and then sends a route announcement to neighboring routers after 
processing it as per the routing configurations.
In well-behaved networks, this distributed decision process converges to 
 a \emph{stable state}~\cite{sppGriffin} in which the internal routing 
 information of each router does not change upon receiving additional route 
 announcements from its neighbors. The best route announcement selected by each router
 defines a \emph{routing tree}: if router $u$ chooses the route announcement 
 sent by router $v$ for destination $d$, then $u$ will forward data packets 
 with destination $d$ to $v$.

\begin{example}[Motivating example]
Consider the example network in Figure~\ref{fig:popl20}  
(from ShapeShifter~\cite{shapeshifter}) with five routers running the Border Gateway Protocol (BGP), where actions taken by routers are shown 
along the edges. 
\end{example}
The verification task is to check whether routes announced at $R_1$ can reach $R_5$.
The network is configured so that 
$R_4$ prefers to route through $R_3$: the community tag $c1$ is 
added on the edge from $R_1$ to $R_3$, which causes the local preference (lp) 
to be updated to 200 on the edge from $R_3$ to $R_4$. Routes with higher 
local preference values are preferred (the default local preference in 
BGP configurations is 100).
Thus, the best route at $R_4$ is through $R_3$ rather than through $R_2$, and the corresponding routing tree is shown by red (solid) arrows.

\begin{figure}
    \centering
\begin{subfigure}{0.49\textwidth}
  \centering
     \includegraphics[width=\textwidth]{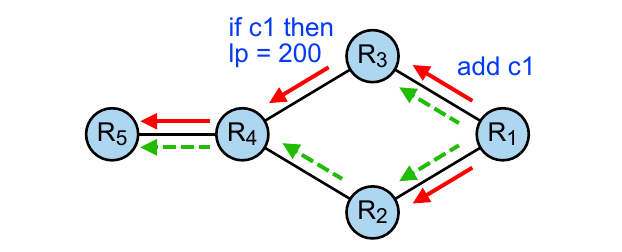}
    \caption{Example 1}
    \label{fig:popl20}
\end{subfigure}
\begin{subfigure}{0.49\textwidth}
    \includegraphics[width=\textwidth]{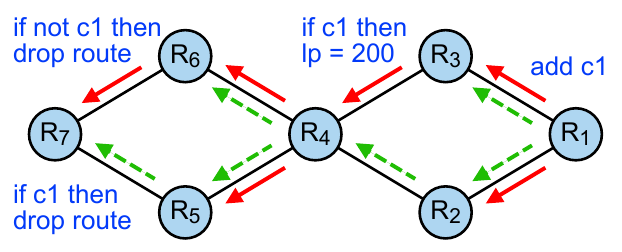}
    \caption{Example 2}
    \label{fig:diamond_example}
\end{subfigure}
\caption{Examples showing correct verification result with the \nca. Red arrows show the routing tree in the real network, and green arrows show an additional routing tree allowed in the \nca.}
\end{figure}

Note that $R_5$ can receive a route even if $R_4$ \emph{chooses} 
to route through $R_2$, though this route is not the best for 
$R_4$ (due to lower local preference). Thus, regardless of the choice 
$R_4$ makes, we can conclude that $R_5$ can reach the destination. 
This observation captures the basic idea in our \nrcs -- 
intuitively, we explore \emph{multiple} available routes at a node: the best route 
as well as other routes.
Then we check if $R_5$ receives a route announcement under 
\emph{each of these possibilities}. 
Since $R_5$ can reach $R_1$ in all 
routing trees considered by our abstraction (regardless of the choices made at $R_2$, $R_3$, or $R_4$), 
we correctly conclude that it can reach $R_1$.

\para{False positives and refinement}
The \nrcs are sound, \ie when verification with an NRC abstraction is successful, the property is guaranteed to hold in the network. We prove the soundness of our abstractions and show that they over-approximate network behavior (\S\ref{s:verif_nrc}). 
However, we could report a false positive, \ie a property violation even when the network satisfies the property. In the same Figure~\ref{fig:popl20}, suppose now that $R_5$ drops route announcements that do not have the tag $c1$. 
In the real network, $R_5$ will receive a route, since the route sent by $R_4$ has the tag $c1$. However, a verification procedure using the NRC abstraction which considers all possible routes at each router 
would report that $R_5$ \emph{cannot} reach the destination, with a counterexample where $R_4$ routes through $R_2$ and its route announcement is later dropped by $R_5$.
In this case, an NRC abstraction higher up in the hierarchy (with higher precision), \eg which chooses a route with maximum local preference but abstracts the path length, will verify that $R_5$ receives a route, thereby eliminating the false positive.
 
\para{Path-sensitive reasoning} 
Even the least precise NRC abstraction can verify many interesting policies due to 
our symbolic SMT-based approach which tracks correlations between 
choices made at different routers, which 
other tools~\cite{shapeshifter} do not track.
\begin{example}[Motivating example with path-sensitivity]
Figure~\ref{fig:diamond_example} shows another BGP network 
(based on an example from Propane~\cite{propane}), with seven routers and destination $R_1$. We would like to verify that $R_7$ can reach the destination.
\end{example}
In the real network, 
$R_4$ chooses the route from $R_3$ which has higher local preference, resulting in the routing tree shown by red (solid) arrows in the figure. Under the \nca which considers all available routes,
$R_4$ could choose the route from $R_2$ instead. Note that regardless of the choice made by $R_4$ the community tags in the routes advertised by $R_5$ and $R_6$ will be the same, and so $R_7$ will receive a route either way -- our abstraction 
is precise enough to
track this correlation 
and correctly concludes that $R_7$ can reach $R_1$.

\section{Preliminaries}
\label{s:prelim}
In this section we briefly cover the background on the key building blocks required to describe our technical contributions. Our abstraction of the control plane is developed on top of the Stable Routing Problem (SRP) model~\cite{minesweeper,bonsai,nvPldi20}, a formal model of network routing that can model distributed routing protocols such as BGP, OSPF, RIP, etc. 
We also briefly describe SMT-based verification using the SRP model (\eg Minesweeper~\cite{minesweeper}) and theory solvers for graphs used in the SMT solver MonoSAT~\cite{monosat}.

\begin{figure}[t]
\fbox{\begin{minipage}{\textwidth}
\textbf{SRP instance: } $\mathnormal{SRP = (G,\ A,\ a_d,\ \prec,\ \mathrm{trans}),\ G = (V,\ E,\ d)}$\\
\textbf{Properties of well-formed SRPs:}
\begin{align*}
&\forall v.\ (v, v) \notin E \tag{\emph{self-loop-free}}\\
&\forall e.\ \mathrm{trans}(e, \infty) = \infty  \tag{\emph{non-spontaneous}}
\vspace*{-0.2in}
\end{align*}
\textbf{SRP solution:} $\mathcal{L}: V \rightarrow \mathnormal{A_{\infty}}$
\begin{align*}
    \mathcal{L}(u) = \begin{cases}
    \mathnormal{a_d} & \text{ if } u = d\\
    \infty & \text{ if } \mathrm{attrs}_{\mathcal{L}}(u) = \emptyset\\
    a \in \mathrm{attrs}_{\mathcal{L}}(u) \text{, minimal by } \prec & \text{ if } \mathrm{attrs}_{\mathcal{L}}(u) \neq \emptyset
    \end{cases}
\end{align*}
\vspace*{-0.2in}
\begin{align*}
   & \mathrm{attrs}_{\mathcal{L}}(u) = \{a \ |\ (e, a) \in \mathrm{choices}_{\mathcal{L}}(u)\}\\
    &\mathrm{choices}_{\mathcal{L}}(u) = \{(e, a) \ |\  e = (v, u),\ a = \mathrm{trans}(e, \mathcal{L}(v)),\ a \neq \infty)\}
\end{align*}
\end{minipage}}
\caption{Cheat sheet for the SRP model~\cite{bonsai}.}
\label{fig:srp}
\end{figure}

\begin{definition}[Stable Routing Problem (SRP)]
An SRP is a tuple $(G,\ A,\ a_d,\ \prec,\ \mathrm{trans})$ where:
\begin{itemize}
    \item $G = (V,\ E,\ d)$ is a graph representing the network topology with vertices $V$, directed edges $E$, and destination vertex $d$. 
    \item $A$ is a set of \emph{attributes} that represent route announcements.
    \item $a_d \in A$ 
    represents the initial route announcement sent by the destination $d$.
    \item $\prec \ \subseteq A \times A$ is a partial order that models the route selection procedure that routers use to select the best route. If $a_1 \prec a_2$ then $a_1$ is preferred.
    \item $trans: E \times A_{\infty} \to A_{\infty}$, where $A_\infty = A \cup \{\infty\}$ and $\infty$ is a special value denoting the absence of a route, is a \emph{transfer function} that models the 
    processing 
    of route announcements sent from one router to another.
\end{itemize}
\end{definition}
Figure~\ref{fig:srp} summarizes the important notions for the SRP model~\cite{bonsai}. 
The main difference from routing algebras~\cite{sppGriffin,routingalgebra,metarouting} is that the SRP model includes a network topology graph $G$ to reason about a given network and its configurations. 
Note that the SRP model does not directly consider failures, but they can be accommodated using prior techniques~\cite{minesweeper,hoyan,origami,nvPldi20}. For example, Minesweeper~\cite{minesweeper} introduces additional Boolean variables $bi$ to model link/device failures, with $bi$ true indicating link (or device) $i$ has failed.

\begin{example}[SRP example]
The network in Figure~\ref{fig:popl20} running a simplified version of BGP (simplified for pedagogic reasons) is modeled using an SRP in which attributes are tuples comprising a 32-bit integer (local preference), a set of 16-bit integers (community tags), and a list of vertices (the path).
We use $a.lp$, $a.comms$, and $a.path$ to refer to the elements of an attribute $a$. The initial attribute at the destination, $a_d = (100, \emptyset, [\ ])$. The preference relation $\prec$ models the BGP route selection procedure which is used to select the best route. The attribute with highest local preference is preferred; to break ties, the attribute with minimum path length is preferred (more details are in Appendix~\ref{app:bgp}). The transfer function for edge $(R_1, R_3)$ adds the tag $c1$ and prepends $R_1$ to the path, returning $(100, a.comms \cup \{c1\}, [R_1] + a.path)$.
The transfer function for edge $(R_3, R_4)$ sets the local preference to $200$ if the tag $c1$ is present, \ie if $c1 \in a.comms$ it returns $(200, a.comms, [R_3] + a.path)$; otherwise, it returns $(100, a.comms,  [R_3] + a.path)$.
The transfer function for other edges $(u, v)$ prepends $u$ to the path, sets the local preference to the default value ($100$), and propagates the community tags.
\end{example}

\para{SRP solutions} A solution of an SRP is a labeling function $\mathcal{L}: V \rightarrow A_{\infty}$ which represents the final route (attribute) chosen by each node when the protocol converges. An SRP can have multiple solutions, or it may have none. Any SRP solution satisfies a \emph{local stability condition}: each node selects the best among the route announcements received from its neighbors.

\para{SMT-based network verification using SRP} In Minesweeper~\cite{minesweeper}, the SRP instance for the network is represented using an SMT formula $N$, such that satisfying assignments of $N$ correspond to SRP solutions. To verify if a network satisfies a given property encoded as an SMT formula $P$, the satisfiability of the formula $F = N \land \neg P$ is checked. If $F$ is satisfiable, a property violation is reported. Otherwise, the property holds over the network (assuming $N$ is satisfiable; if $N$ is unsatisfiable there are no stable paths in the network). 
We use the same overall framework:
we use an SMT formula $\wh{N}$ to encode an \emph{abstract} SRP instance, and check the  satisfiability of $\wh{N} \land \neg P$.

\para{SMT with theory solver for graphs}
MonoSAT \cite{monosat} is an SMT solver with support for \emph{monotonic} predicates. A predicate $p$ is (positive) monotonic in a variable $u$ if whenever $p(\ldots u = 0 \ldots)$ is true, $p(\ldots u = 1 \ldots)$ is also true.
Graph reachability is a monotonic predicate:
if node $v_1$ can reach node $v_2$ in a graph with some 
edges removed, then $v_1$ can still reach $v_2$ when the edges are added. MonoSAT leverages monotonicity to provide efficient theory support for graph-based reasoning using 
a \emph{symbolic graph}, a graph with a Boolean variable per edge. 
Formulas can include these Boolean edge variables as well as monotonic graph predicates 
such as reachability and max-flow.
MonoSAT has been used 
to check reachability in data planes in AWS networks~\cite{AwsCav19,bayless2021debugging} but not \emph{control planes}, as in this work.

\section{NRC Abstractions}
\label{s:treeabs}
 We formalize our \nrcs as \emph{abstract} SRP instances, each of which is parameterized by a partial order. 
 
\begin{definition}[Abstract SRP]
For an SRP $S = (G,\ A,\ a_d,\ \prec,\ trans)$, an abstract SRP $\wh{S}_{\prec'}$ 
is a tuple $(G,\ A,\ a_d,\ \prec',\ trans)$, where $G$, $A$, $a_d$, and $trans$ are defined as in the SRP $S$, and $\prec' \  \subseteq A_{\infty} \times A_{\infty}$ is a partial order which satisfies the following condition:
\begin{align}
    \forall B \subseteq A,\ \mathrm{minimal}(B, \prec) \subseteq \mathrm{minimal}(B, \prec')
    \label{eq:prec_cond}
\end{align}
where $\mathrm{minimal}(B, \prec) = \{a \in B \ | \ \nexists a' \in B.\ a' \neq a \land a' \prec a \}$ denotes the set of minimal elements of $B$ according to $\prec$.
\end{definition}

Condition~\eqref{eq:prec_cond} specifies that for any set of attributes $B$, the minimal elements of $B$ by $\prec$ are also minimal by $\prec'$. For example, let $\prec$ be the lexicographic ordering over pairs of integers: $(a, b) \prec (c, d)$ iff $a < c$ or $a = c$ and $b < d$. A partial order $\prec'$ that only compares the first components (\ie $(a, b) \prec' (c, d)$ iff $a < c$) satisfies condition~\eqref{eq:prec_cond}. 
Note that condition~\eqref{eq:prec_cond} ensures that the solutions (\ie minimal elements) at any node in an SRP are also solutions at the same node in the abstract SRP, \ie the \nrcs over-approximate the behavior of an SRP. 

The precision of the \nrcs varies according to the partial order used in the abstract SRP.
Our least precise abstraction uses $\prec^*$, the partial order in which any two attributes are incomparable and $\infty$ is worse than all attributes. The corresponding abstract route selection procedure chooses \emph{any} available route. The following example illustrates solutions of an abstract SRP $\wh{S}_{\prec ^*}$.

\begin{figure}[t]
\begin{center}
\begin{subfigure}[b]{0.49\textwidth}
     \centering
     \includegraphics[width=\textwidth]{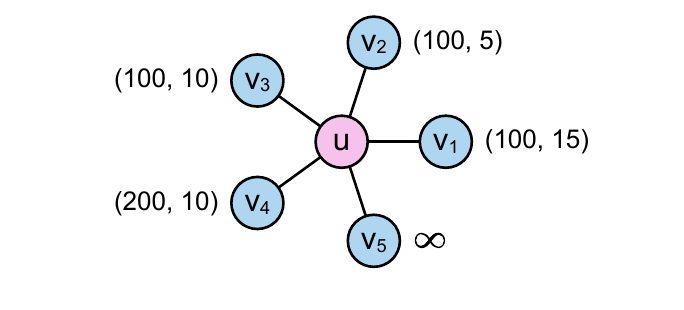}
      \caption{Attributes at node $u$'s neighbors}
     \label{fig:hasse1}
 \end{subfigure}
 \begin{subfigure}[b]{0.49\textwidth}
     \centering
    \includegraphics[width=\textwidth]{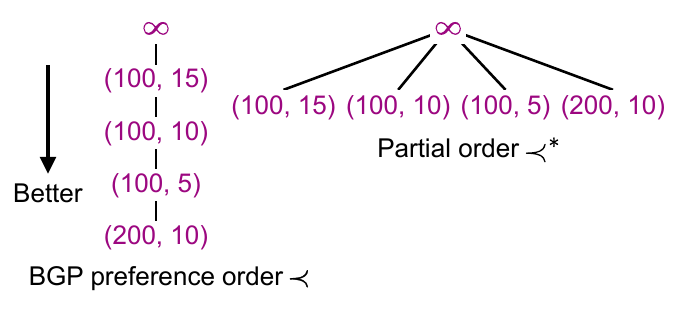}
     \caption{Partial orders in concrete (left) and abstract (right) SRPs}
     \label{fig:hasse2}
 \end{subfigure}
  \end{center}
  \caption{Example: Partial orders on neighbor attributes.}
  \label{fig:hasse}
\end{figure}

\begin{example}[Abstract SRP $\wh{S}_{\prec ^*}$] Consider the example shown in 
 Figure~\ref{fig:hasse1}, which shows simplified BGP attributes (pairs of local preference and path length) along with $\infty$ (which denotes no route), at the neighbors of a node $u$ in some network. Figure~\ref{fig:hasse2} shows the Hasse diagrams for the partially ordered sets 
comprising these attributes for two partial orders: (1) $\prec$, the  partial order in the standard (concrete) SRP (lifted to $A_\infty$) that models BGP's route selection procedure (described in Appendix~\ref{app:bgp}), shown on the left, and (2) $\prec^*$, the partial order corresponding to choosing any available route (defined above), shown on the right. Attributes appearing lower in the Hasse diagram are considered better. Hence, in the concrete SRP, $u$ will select $(200, 10)$. In the abstract SRP, any element that is minimal by $\prec ^*$ can be a part of a solution, so $u$ nondeterministically selects one of the routes it receives. Observe that $(200, 10)$, the solution for $u$ in the concrete SRP, is \emph{guaranteed} to be one of the solutions for $u$ in the abstract SRP. This over-approximation due to condition~\eqref{eq:prec_cond} ensures that our abstraction is sound, \ie it will not miss any property violations.
\end{example}

\para{Verification with the NRC abstractions}
\label{s:verif_nrc}
To verify that a property holds in a given network using an abstraction $\prec'$, we construct an SMT formula $\wh{N}$
such that satisfying assignments of $\wh{N}$ are solutions of the abstract SRP $\wh{S}_{\prec '}$ for the network, and conjoin it with the negation of an encoding of the  property $P$ to get a formula $F = \wh{N} \land \neg P$. If $F$ is unsatisfiable, then all abstract SRP solutions satisfy the property and verification is successful.
Otherwise, we report a violation and return the satisfying assignment.

Our approach is sound for properties that hold for all stable states of a network, \ie properties of the form $\forall \mathcal{L} \in Sol(S).P(\mathcal{L})$, where $Sol(S)$ denotes the SRP solutions for the network. 
Note that, like Minesweeper~\cite{minesweeper}, our approach only models the stable states of a network, and cannot verify properties over transient states that arise before the routing protocol converges.

\begin{restatable}{lemma}{overapproxlemma}[Over-approximation]
\label{lemma:overapprox}
For an SRP $S$ and a corresponding abstract SRP $\wh{S}_{\prec '}$ with solutions $Sol(S)$ and  $Sol(\wh{S}_{\prec '})$ respectively, $Sol(S) \subseteq Sol(\wh{S}_{\prec'})$.
\end{restatable}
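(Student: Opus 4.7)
The plan is to show the inclusion pointwise: take an arbitrary $\mathcal{L} \in Sol(S)$ and argue that it also satisfies the solution conditions of $\widehat{S}_{\prec'}$ as laid out in Figure~\ref{fig:srp}. The central observation is that $G$, $A$, $a_d$, and $\mathrm{trans}$ are identical in $S$ and $\widehat{S}_{\prec'}$, so for any labeling $\mathcal{L}$ the auxiliary sets $\mathrm{choices}_{\mathcal{L}}(u)$ and $\mathrm{attrs}_{\mathcal{L}}(u)$ agree between the two SRPs. This means the three case-branches in the definition of a solution differ only in the minimality clause, so everything reduces to comparing minimal elements under $\prec$ and $\prec'$.

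Concretely, I would fix $\mathcal{L} \in Sol(S)$ and a node $u \in V$, then split into the three cases. If $u = d$, then $\mathcal{L}(u) = a_d$ trivially satisfies the abstract condition. If $\mathrm{attrs}_{\mathcal{L}}(u) = \emptyset$ in $S$, then by the invariance above the same set is empty in $\widehat{S}_{\prec'}$, so $\mathcal{L}(u) = \infty$ meets the abstract condition as well. If $\mathrm{attrs}_{\mathcal{L}}(u) \neq \emptyset$, then because $\mathcal{L}$ is an $S$-solution, $\mathcal{L}(u) \in \mathrm{minimal}(\mathrm{attrs}_{\mathcal{L}}(u), \prec)$, and applying condition~\eqref{eq:prec_cond} with $B = \mathrm{attrs}_{\mathcal{L}}(u) \subseteq A$ yields $\mathcal{L}(u) \in \mathrm{minimal}(\mathrm{attrs}_{\mathcal{L}}(u), \prec')$, which is exactly what the abstract SRP requires. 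Since $u$ was arbitrary, $\mathcal{L} \in Sol(\widehat{S}_{\prec'})$.

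The only mildly subtle point, which I would make explicit, is that condition~\eqref{eq:prec_cond} is stated with $B \subseteq A$ (not $A_\infty$), and here we need to apply it to $\mathrm{attrs}_{\mathcal{L}}(u)$; this is fine because by construction $\mathrm{attrs}_{\mathcal{L}}(u) \subseteq A$ (the $\infty$ values are explicitly excluded by the $a \neq \infty$ filter in the definition of $\mathrm{choices}_{\mathcal{L}}(u)$). I do not anticipate a real obstacle here: the lemma is essentially an unwinding of the definitions together with the one hypothesis~\eqref{eq:prec_cond}, and the bulk of the argument is the case analysis above.
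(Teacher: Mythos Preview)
Your proposal is correct and follows essentially the same approach as the paper's proof: fix $\mathcal{L}\in Sol(S)$, observe that the two SRPs differ only in the partial order so the auxiliary sets coincide, and then invoke condition~\eqref{eq:prec_cond} on the nonempty case. Your write-up is in fact more explicit than the paper's (you spell out all three cases and the $\mathrm{attrs}_{\mathcal{L}}(u)\subseteq A$ point), but the underlying argument is the same.
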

\noindent The proof follows from the definition of SRP solutions and the over-approximation ensured by  condition~\eqref{eq:prec_cond} (the complete proof is in Appendix~\ref{app:proofs}). 

\begin{restatable}{theorem}{soundnessthm}[Soundness]
\label{thm:soundness}
Given SMT formulas $\wh{N}$ and $N$ modeling the abstract and concrete SRPs, respectively, and an SMT formula $P$ encoding the property to be verified, if $\wh{N} \land \neg P$ is unsatisfiable, then $N \land \neg P$ is also unsatisfiable.
\end{restatable}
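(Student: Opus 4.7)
The plan is to prove the contrapositive: assuming $N \land \neg P$ is satisfiable, I will show that $\wh{N} \land \neg P$ is satisfiable as well. First I would invoke the intended semantics of the two encodings, which is already fixed in the preceding discussion: by construction, satisfying assignments of $N$ correspond to SRP solutions $\mathcal{L} \in Sol(S)$, and satisfying assignments of $\wh{N}$ correspond to abstract SRP solutions $\mathcal{L} \in Sol(\wh{S}_{\prec'})$. Since $P$ encodes a property over stable states, it depends only on the labeling $\mathcal{L}$, so whether a given $\mathcal{L}$ satisfies $P$ is the same whether $\mathcal{L}$ is viewed as an assignment to $N$ or to $\wh{N}$.

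The core of the proof is then a one-line application of Lemma~\ref{lemma:overapprox}. Starting from a satisfying assignment of $N \land \neg P$, I extract the induced labeling $\mathcal{L} \in Sol(S)$ with $\mathcal{L} \not\models P$. By the Over-approximation Lemma, $Sol(S) \subseteq Sol(\wh{S}_{\prec'})$, so $\mathcal{L} \in Sol(\wh{S}_{\prec'})$. Lifting $\mathcal{L}$ back through the correspondence for $\wh{N}$ produces a satisfying assignment of $\wh{N}$, and because the truth of $P$ on $\mathcal{L}$ is unchanged, the same assignment satisfies $\neg P$. Hence $\wh{N} \land \neg P$ is satisfiable, contradicting the hypothesis.

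The main (and only real) obstacle is making precise that $P$ is interpreted uniformly across the two encodings: $P$ must reference only the shared labeling variables representing $\mathcal{L}$, not auxiliary variables particular to the route-selection machinery inside $N$ or $\wh{N}$. This is exactly the class of properties described in Section~\ref{s:verif_nrc} (properties of the form $\forall \mathcal{L} \in Sol(S).\ P(\mathcal{L})$), so the hypothesis is satisfied and the theorem reduces cleanly to Lemma~\ref{lemma:overapprox}. I would state this uniformity assumption explicitly at the start of the proof so that the subsequent chain of implications $\wh{N} \land \neg P\ \text{unsat} \Rightarrow \text{no } \mathcal{L} \in Sol(\wh{S}_{\prec'}) \text{ violates } P \Rightarrow \text{no } \mathcal{L} \in Sol(S) \text{ violates } P \Rightarrow N \land \neg P\ \text{unsat}$ is immediate.
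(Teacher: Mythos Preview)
Your proposal is correct and matches the paper's approach: both reduce the theorem directly to Lemma~\ref{lemma:overapprox} via the correspondence between satisfying assignments of $N$ (resp.\ $\wh{N}$) and solutions of $S$ (resp.\ $\wh{S}_{\prec'}$). The paper's proof is simply a terser version of the same argument, stated in the forward direction rather than the contrapositive; your explicit treatment of the uniformity of $P$ across the two encodings is a welcome clarification that the paper leaves implicit.
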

\noindent The proof follows directly from Lemma~\ref{lemma:overapprox} and is presented in Appendix~\ref{app:proofs}.

\para{Verification under failures} 
We model link failures using $\infty$, which denotes no route (device failures are modeled by failures of all incident links). Let $F$ denote the set of failed links. Given SRP $S = (G, A, a_d \prec, \mathrm{trans})$, we model network behavior under failures of edges in $F$ using an SRP $S_{F} = (G, A, a_d, \prec, \mathrm{trans}_F)$ where $\mathrm{trans}_F$ models failed links by returning $\infty$ along edges in $F$, and is the same as $\mathrm{trans}$ for other edges. To 
model failures of links in $F$ 
using our abstraction, we define the abstract SRP for $S_F$, $\wh{S}_{\prec' F} = (G, A, a_d, \prec', \mathrm{trans}_F)$; it only differs from $S_F$ in the partial order $\prec'$. Since Lemma~\ref{lemma:overapprox} holds for an arbitrary concrete SRP $S$, it holds for $S_F$, \ie any solution of $S_F$ is also a solution of $\wh{S}_{\prec' F}$. Hence, the \nrcs are sound for verification under specified failures.

\para{Hierarchy of NRC abstractions} The least precise NRC abstraction uses the partial order $\prec ^*$ defined above, which corresponds to choosing any available route. More precise abstractions can be obtained by modeling the route selection procedure \emph{partially} rather than choosing any route nondeterministically.
Figure~\ref{fig:gen_nrc} shows partial orders and the corresponding route selection procedures (presented as steps in a ranking function) for OSPF and BGP, ordered from the least precise ($\prec ^*$) to the most precise ($\prec$). For example, $\prec_{(lp, pathlen)}$ corresponds to the first two steps of BGP's route selection procedure (described in Appendix~\ref{app:bgp}), \ie it is a two-step ranking function which first finds routes with maximum local preference, and from these routes, selects one with minimum path length. Abstractions higher up in the hierarchy are more precise (\ie result in fewer false positives) as they model more of the route selection procedure, but are more expensive as their SMT encodings involve more variables and constraints. This tradeoff between precision and performance in the \nrcs is evident in our experiments: verification with the $\prec_{(lp)}$ partial order was successful on all networks for which verification with $\prec ^*$ gave false positives (\S\ref{s:eval_wan}), but took up to 2.7x more time than verification with $\prec ^*$.
\begin{figure}
    \centering
\begin{tabular}{|l|c|p{8cm}|}
\hline
Protocol & Partial order & Best route (lexicographic ordering)\\
\hline
\multirow{3}{*}{OSPF} & $\prec ^*$ & Any (nondeterministic choice)\\
& $\prec_{path cost}$ & min path cost\\
& $\prec_{ospf}$ & min path cost, min router id\\
\hline
\multirow{5}{*}{BGP} & $\prec ^*$ & Any (nondeterministic choice)\\
& $\prec_{(lp)}$ & max lp (local preference)\\
& $\prec_{(lp, pathlen)}$ & max lp, min path length\\
& $\prec_{(lp, pathlen, MED)}$ & max lp, min path length, min MED\\
& $\prec_{bgp}$ & max lp, min path length, min MED, min router ID\\
\hline
\end{tabular}
    \caption{Hierarchy of \nrcs for OSPF and BGP Protocols.}
    \label{fig:gen_nrc}
\end{figure}

\para{Abstraction refinement} 
If verification with a specific abstraction fails, we first \emph{validate} the reported counterexample in the concrete SRP by checking if each node actually chose the best route. Note that the selected routes in the counterexample may contain only some fields, depending on the abstraction used. We first find the values of the other fields and the set of available routes by applying the transfer functions along the edges in the counterexample, starting from the destination router (\ie by effectively simulating the counterexample on the concrete SRP). We then check if all routers selected the best route that they received. If this is the case, we have found a stable solution in the concrete SRP that violates the property; if not, the counterexample is \emph{spurious}. We can eliminate the spurious counterexample by adding a blocking clause (\ie the negation of the variable assignment corresponding to the counterexample), and repeat verification with the same abstraction in a CEGAR~\cite{cegar} loop. However, this could take many iterations to terminate. Instead, we 
suggest choosing a more precise abstraction which is higher up in the NRC hierarchy. We could potentially 
use a \emph{local} refinement procedure that uses a higher-precision abstraction only at certain routers, based on the counterexample. We plan to explore this and other ways of counterexample-guided abstraction refinement in future work.

\section{SMT Encodings}
\label{s:symgraph}
In this section we present SMT encodings for an abstract SRP based on symbolic graphs, as well as encodings of 
properties addressed in this paper, such as reachability and policy properties. 
We begin by providing definitions 
for a symbolic graph and its solutions.
\begin{definition}[Symbolic graph~\cite{monosat}]
A symbolic graph $\mathcal{G}_{RE}$ is a tuple $(G, RE)$ where $G = (V, E)$ is a graph representing the network topology and $RE = \{re_{uv} \ | \ (u, v) \in E\}$ is a set of Boolean routing edge variables.
\end{definition}

\begin{definition}[Symbolic graph solutions~\cite{monosat}] 
A symbolic graph $\mathcal{G}_{RE} = (G, RE)$ and a formula $F$ over $RE$ 
has solutions  $Sol(\mathcal{G}_{RE},\ F)$ which are subgraphs of $G$ defined by assignments to $RE$ that satisfy $F$, such that an edge $(u, v)$ is in a solution subgraph iff $re_{uv} = 1$ in the corresponding satisfying assignment.
\end{definition}

\subsection{Routing Constraints on Symbolic Graphs}
We now describe how to construct an SMT formula $\wh{N}$ 
such that the symbolic graph solutions $Sol(G_{RE}, \wh{N})$ correspond to solutions of the abstract SRP for the network, where $G_{RE}$ is the symbolic graph for the network topology $G$. Our formulation consists of constraints grouped into the following categories (the complete formulation is summarized in Figure~\ref{fig:encoding}):

\begin{itemize}
    \item \textbf{Routing choice constraints:} Each node other than the destination either chooses a route from a neighbor or chooses $None$ (which denotes no route). We use a variable $nChoice$ at each node to denote either the selected neighbor or $None$. The routing edge $re_{vu}$ is true if and only if $u$ chooses a route from $v$, \ie $nChoice_u = nID(u, v)$, where $nID(u, v)$ denotes $u$'s neighbor ID for $v$.
    \item \textbf{Route availability constraints:} If a node $u$ chooses the route from its neighbor $v$, then $v$ must have a route to the destination. 
    This ensures that symbolic graph solutions are trees rooted at the destination vertex. If all neighbors $v$ of $u$ either don't have a route ($\neg hasRoute_v$) or the route is dropped from $v$ to $u$ ($routeDropped_{vu}$), then $u$ must choose $None$.
    \item \textbf{Attribute transfer and route filtering constraints:} If $u$ selects the route from its neighbor $v$ (\ie $re_{vu} = 1$), the route should not be dropped along the edge $(v, u)$ and the transfer function must relate the attributes of $u$ and $v$, ensuring that symbolic graph solutions respect route filtering policies.
    The attribute at the destination is the initial route announcement $a_d$.
\end{itemize}

\begin{figure}
\fbox{
\begin{minipage}{\textwidth}
\textbf{Abstract SRP } $\mathnormal{\wh{S} = (G,\ A,\ a_d,\ \prec',\  trans),\ G = (V,\ E,\ d)}$\\
\textbf{Symbolic graph } $\mathnormal{\mathcal{G}_{RE} = (G,\ RE)}$\\\\
\textbf{Variables}\\
\begin{tabular}{p{5cm} p{6.5cm}}
$attr1_u, attr2_u, \ldots$ : bit vector & \emph{route announcement fields, $\forall u \in V$}\\
$nChoice_u$ : bit vector & \emph{neighbor choices,  $\forall u \in V \setminus \{d\}$}\\
$hasRoute_u$ : Boolean & \emph{placeholder for route availability, $\forall u \in V$}\\
$routeDropped_{uv}$ : Boolean & \emph{route dropped along an edge, $\forall (u, v) \in E$}
\end{tabular}\\\\
\textbf{Constants}\\
\begin{tabular}{p{5cm} p{6.5cm}}
$nID(u, v)$ : integer & \emph{$u$'s neighbor ID for $v$, $\forall (u, v) \in E$}\\
$None_u$ : integer & \emph{special ID denoting no neighbor, $\forall u \in V$}\\
\end{tabular}\\\\
\textbf{Routing choice constraints}
\begin{align}
    &\left( \bigvee_{(v, u) \in E} nChoice_u = nID(u, v) \right) \lor nChoice_u = None_u \label{eq:nchoice}\\
    &nChoice_u = nID(u, v) \leftrightarrow re_{vu} \label{eq:re_choice}\\
    &\neg re_{vd} \ \ \forall (v, d) \in E
    \label{eq:in_dest}
\end{align}
\textbf{Route availability constraints}
\begin{align}
    &nChoice_u = nID(u, v) \rightarrow hasRoute_v \label{eq:r_avail}\\
    &nChoice_u = None_u \leftrightarrow \nonumber\\
    &\bigwedge_{(v, u) \in E} \neg hasRoute_v \lor routeDropped_{vu}
    \label{eq:no_route}
\end{align}
\textbf{Attribute transfer and route filtering constraints}
\begin{align}
    &re_{vu} \rightarrow  attr_u = trans_{vu}(attr_v)
    \label{eq:trans}\\
    &re_{vu} \rightarrow \neg routeDropped_{vu}
    \label{eq:dropped}\\
    &attr_{d} = a_d
    \label{eq:init}
\end{align}
\textbf{Solver-specific constraints}\\\\
(a) SMT solvers with graph theory support (\eg MonoSAT):
\begin{align}
    &\forall u \in V, \ hasRoute_u \leftrightarrow \mathcal{G}_{RE}.reaches(d, u) \label{eq:graph_r_avail}
\end{align}
(b) SMT solvers without graph theory support (\eg Z3):
\begin{align}
    &hasRoute_{d} \label{eq:nograph_r_avail1}\\
    &\forall u \neq d,\ hasRoute_{u} \leftrightarrow \bigvee_{(v, u) \in E} hasRoute_v \land re_{vu}\label{eq:nograph_r_avail2}\\
    &rank_{d} = 0\label{eq:nograph_r_avail3}\\
    &\forall (v, u) \in E,\ re_{vu} \rightarrow  rank_u = (rank_v + 1) \label{eq:nograph_r_avail4}
\end{align}
\end{minipage}
}
\caption{SMT encoding for abstract SRP}
\label{fig:encoding}
\end{figure}

Our formulation is parameterized by three placeholders: $hasRoute_v$, which is true iff $v$ receives a route 
from the destination; $trans_{vu}$, the transfer function along edge $(v, u)$; and $routeDropped_{vu}$, which models route filtering along edge $(v, u)$. Of these, $trans_{vu}$ and $routeDropped_{vu}$ depend on the network protocol and configuration. 
(These are shown in detail in Appendix~\ref{app:example_transfer_constraints}
for the network in Figure~\ref{fig:diamond_example}.) The encodings of 
$hasRoute$ are described in the next subsection.

\subsection{Solver-specific Constraints} 
\label{s:solver-specific}
We have two encodings of the placeholder $hasRoute$, depending on whether the SMT solver has graph theory support. Additionally, we leverage graph-based reasoning to concisely encode transfer functions that use regular expressions over the AS path. This is a commonly used feature in BGP policies, but is not supported by most control plane verification tools.

\para{SMT solvers with graph theory support}

\noindent
\emph{Route availability placeholder.} We use the 
reachability predicate $\mathcal{G}_{RE}.reaches$ to encode  route availability:
$hasRoute_v$ (\ie node $v$ has a route to destination $d$) if and only if
$\mathcal{G}_{RE}.reaches(d, v)$ (\ie
there is a path from $d$ to $v$ in the symbolic graph $\mathcal{G}_{RE}$).

\noindent
\emph{Regular expressions over AS path.} We can support regular expressions that specify that the AS path contains certain ASes or sequences of ASes by using the \re variables and the reachability predicate without introducing any additional variables.
For example, the regular expression ``.*ab.*c.*d.*"  (where `.' matches any character and `*' denotes 0 or more occurrences of the preceding character) matches any path that traverses the edge (a, b) and then passes through nodes c and d. Existing SMT encodings would require additional variables to track a sequence of nodes in the AS path, but we can encode this concisely as: 
$re_{ab} \land \mathcal{G}_{RE}.reaches(b, c) \land \mathcal{G}_{RE}.reaches(c, d)$.

\para{Standard SMT solvers} For standard SMT solvers without specialized support for graph-based reasoning, such as Z3~\cite{z3}, we interpret the placeholder $hasRoute$ as a reachability marker which indicates whether the node has received a route announcement, and add constraints to ensure that the marker is suitably propagated in the symbolic graph. 
To prevent solutions with loops, we use another variable, $rank$, at each node to track the length of the path along with additional constraints (shown in Figure~\ref{fig:encoding}).

\para{Loop prevention} In BGP, routing loops are prevented by tracking the list of ASes (autonomous systems) in the path using the AS path attribute; routers drop route announcements if the AS path contains their AS. To model BGP's loop prevention mechanism exactly, 
Minesweeper's~\cite{minesweeper} SMT encoding 
would require $O(N^2)$ additional variables (where $N$ is the number of routers in the network) to track, for each router, the set of routers in the AS path. As this is expensive, Minesweeper~\cite{minesweeper} uses an optimization that relies on the route selection procedure to prevent loops when routers use the default local preference: the shorter loop-free path will be selected, as routes with shorter path length are preferred. Our encodings for $hasRoute$ model BGP's loop prevention mechanism exactly with much fewer than $O(N^2)$ additional variables: 
the MonoSAT encoding does not need any additional variables and our Z3 encoding uses $O(N)$ additional variables ($rank$) to prevent symbolic graph solutions with loops.

\subsection{Benefits of the \nrcs in SMT solving} 
\label{s:benefits}
The \nrcs provide the following benefits in improving SMT solver performance and scalability.

\para{Fewer attributes} 
The most direct benefit is that many route announcement fields 
become irrelevant and can be removed from the network model, resulting in smaller SMT formulas.
Specifically, all fields required to model \emph{route filtering} (\ie the dropping of route announcements) and the property of interest are retained, but fields that are used only for route selection (\eg local preference and path length) can be removed depending on the specific abstraction. 
    
\para{Expensive transfers can be avoided during SMT search}
Once a neighbor is selected during the SMT search, then transfers of attributes from other neighbors become irrelevant. An attribute transfer constraint has the $re$ variable on the left hand side of an implication~\eqref{eq:trans}. Thus, if $re_{vu}$ is assigned false (\ie $v$ is not selected) during the search, the constraint is trivially satisfied and the transfer function (on the right side) becomes irrelevant, thereby improving solver performance. In contrast, without any abstraction, 
each node 
\emph{must} consider transfers of attributes from \emph{all} neighbors 
to pick the best route.

\subsection{Encoding Properties for Verification}
\label{s:prop_encoding}
\para{Reachability} We encode the property that a node $u$ can reach destination $d$ by asserting its negation: $nChoice_u = None_u$.

\para{Non-reachability/Isolation} We encode the property that a node $u$ can never reach the destination $d$ by asserting its negation: $nChoice_u \neq None_u$.

\para{No-transit property} Routing policies between 
autonomous systems (ASes) are typically influenced by business relationships between them, such as provider-customer or peer-peer. A provider AS is paid to carry traffic to and from its customers, while peer ASes exchange traffic between themselves and their customers without any charge. The BGP policies (Gao-Rexford conditions~\cite{gao-rexford,bgp_policies}) between ASes usually ensure that an AS does not carry traffic from one peer or provider to another. 
This is called a \emph{no-transit} property; 
its negation is encoded as the constraint:
\begin{equation}
    \bigvee_{u \in V} \ \ \bigvee_{\substack{v, w \in PeerProv(u),\\v \neq w}} re_{vu} \land re_{uw}
\end{equation}
where $PeerProv(u)$ is the set of neighbors of $u$ that are its peers or providers.

\para{Policy properties} 
A BGP policy can be defined by assigning a particular meaning to specific community tags. Policy properties can then be checked by asserting a formula over the community attribute at a node. 
\begin{figure}[t]
\centering
\begin{subfigure}[b]{0.48\textwidth}
    \centering
      \includegraphics[width=\linewidth]{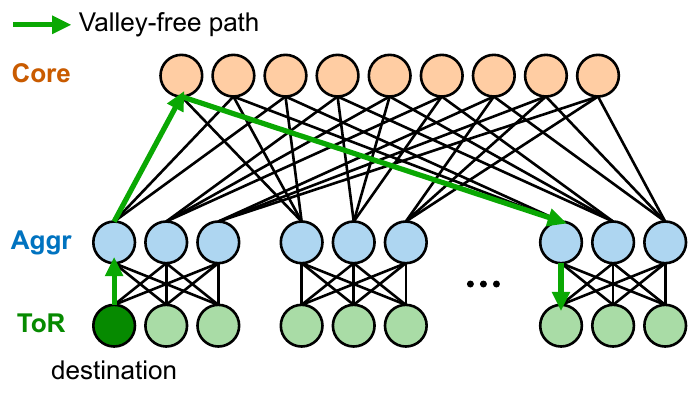}
      \caption{FatTree topology}
      \label{fig:valleyfree_topo}
\end{subfigure}
\begin{subfigure}[b]{0.48\textwidth}
    \centering
       \includegraphics[width=\linewidth]{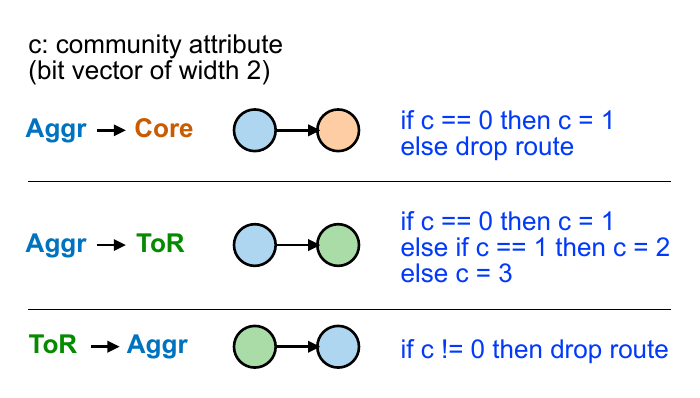}
      \caption{Valley-free policy}
      \label{fig:valleyfree_policy}
\end{subfigure}
\caption{Example data center network with a valley-free policy.}
\label{fig:valleyfree}
\end{figure}

\begin{example}[Valley-free policy in data center networks]
\label{ex:valley-free}
The valley-free policy prevents paths that have valleys, \ie paths which go up, down, and up again between the layers of a FatTree network topology~\cite{fattree,propane}; a valley-free path is highlighted (in green) in Figure~\ref{fig:valleyfree_topo} which shows a FatTree topology. The lower two layers of the FatTree are grouped into pods; three pods are shown in Figure~\ref{fig:valleyfree_topo}. An implementation of the valley-free policy using the community attribute $c$ in BGP is shown in Figure~\ref{fig:valleyfree_policy}. A path between ToR (top-of-rack) routers in different pods that contains a valley will pass through at least 3 Aggr (aggregation) routers and will have $c = 3$ (with the logic for updating $c$ as shown in Figure~\ref{fig:valleyfree_policy}). Hence, the valley-free property at a node $u$ is checked by asserting its negation: 
$comm_{u} = 3$.
\end{example}

\section{Implementation and Evaluation}
\label{s:eval}
We implemented our SMT encoding (\S\ref{s:symgraph}) for the abstract SRP with the \nrcs, 
and extended it for the concrete SRP 
using additional constraints (described in Appendix~\ref{app:concreteSmt}) in our prototype tool, \sysname. It uses the Python APIs of the MonoSAT and Z3 solvers. The tool input is an 
intermediate representation (IR) of a network topology and  configurations that 
represents routing policy using match-action rules\footnote{
Our IR and benchmarks are described in Appendix~\ref{app:grammar}, and are publicly available~\cite{benchmark-repo}.}, similar to the route-map constructs in Cisco's configuration language. (Our IR will serve as a target for front-ends such as Batfish~\cite{batfish} and NV~\cite{nvPldi20} in future work.) 

In our evaluation, we measure the effectiveness of the \nrcs and use two different back-end SMT solvers -- MonoSAT and Z3 (with bitvector theory and bit-blasting enabled). We consider the following four settings in our experiments:
\begin{enumerate}
\item \textbf{abs\_mono}: with \nca ($\prec ^*)$, using MonoSAT 
\item \textbf{abs\_z3}: with \nca ($\prec ^*)$, using Z3 
\item \textbf{mono}: without abstraction, using MonoSAT 
\item \textbf{z3}: without abstraction, using Z3.
\end{enumerate}

We evaluated \sysname on two types of benchmarks: (1) data center networks with FatTree topologies~\cite{fattree}, with sizes ranging from 125 to 36,980 routers (FatTree parameter ranging from k=10 to k=172), and (2) wide area networks comprising topologies from Topology Zoo~\cite{topozoo}, and a new set of benchmark examples that we created using parts of the Internet reported by BGPStream~\cite{bgpstream} to have been involved in misconfiguration incidents. More details of our benchmark examples are provided in Appendix~\ref{app:grammar}. 
We also compared \sysname with two state-of-the-art control plane verifiers on the data center benchmarks. In this comparison, the no-abstraction settings indicate the effectiveness of our symbolic graph-based SMT encoding, which is distinct from other SMT-based tools. All experiments were run on a Mac laptop with a 2.3 GHz Intel i7 processor and 16 GB memory. 

\subsection{Data Center Benchmarks}
\label{s:dc}
We generated data center benchmark examples with FatTree topologies~\cite{fattree} with well-known policies (used in prior work). These include: (1) a shortest-path routing policy, (2) the valley-free policy described in example~\ref{ex:valley-free}, (3) an extension of the valley-free policy that uses regular expressions to enforce isolation between a FatTree pod and an external router connected to the core routers of the FatTree, and (4) a buggy valley-free policy in which routers in the last pod cannot reach routers in other pods. We checked reachability between a router in the first pod and a router in the last pod for all policies, and a policy-based property for (2) and (3). The results are shown in Figure~\ref{fig:dc_results}, with each graph showing the number of nodes on the x-axis and the verification time (in seconds) on the y-axis.

\begin{figure}
    \centering
\begin{subfigure}{0.31\textwidth}
    \centering
     \includegraphics[width=\textwidth]{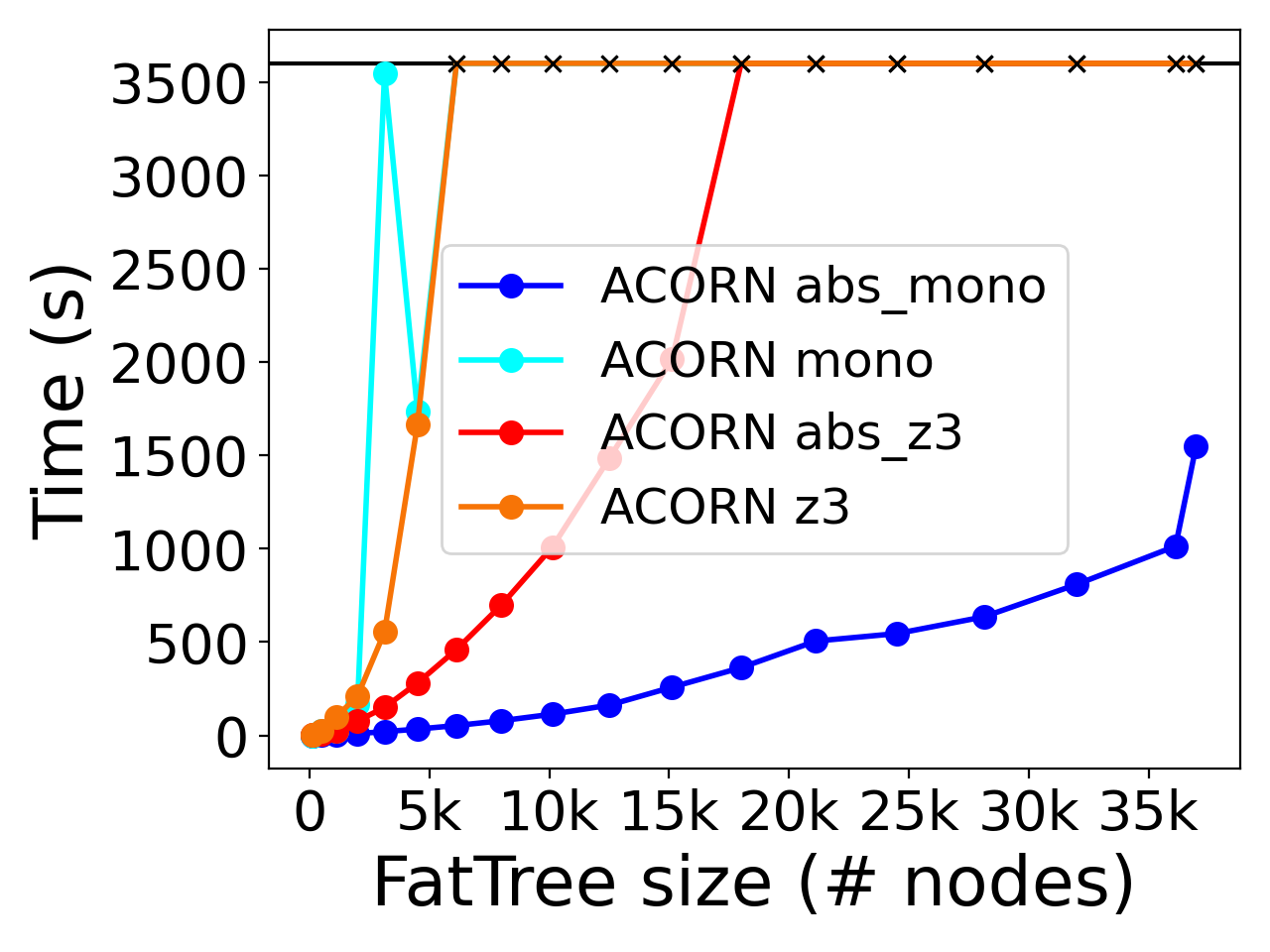}
    \caption{}
    \label{fig:sp_reach}
\end{subfigure}
\begin{subfigure}{0.31\textwidth}
    \centering
    \includegraphics[width=\textwidth]{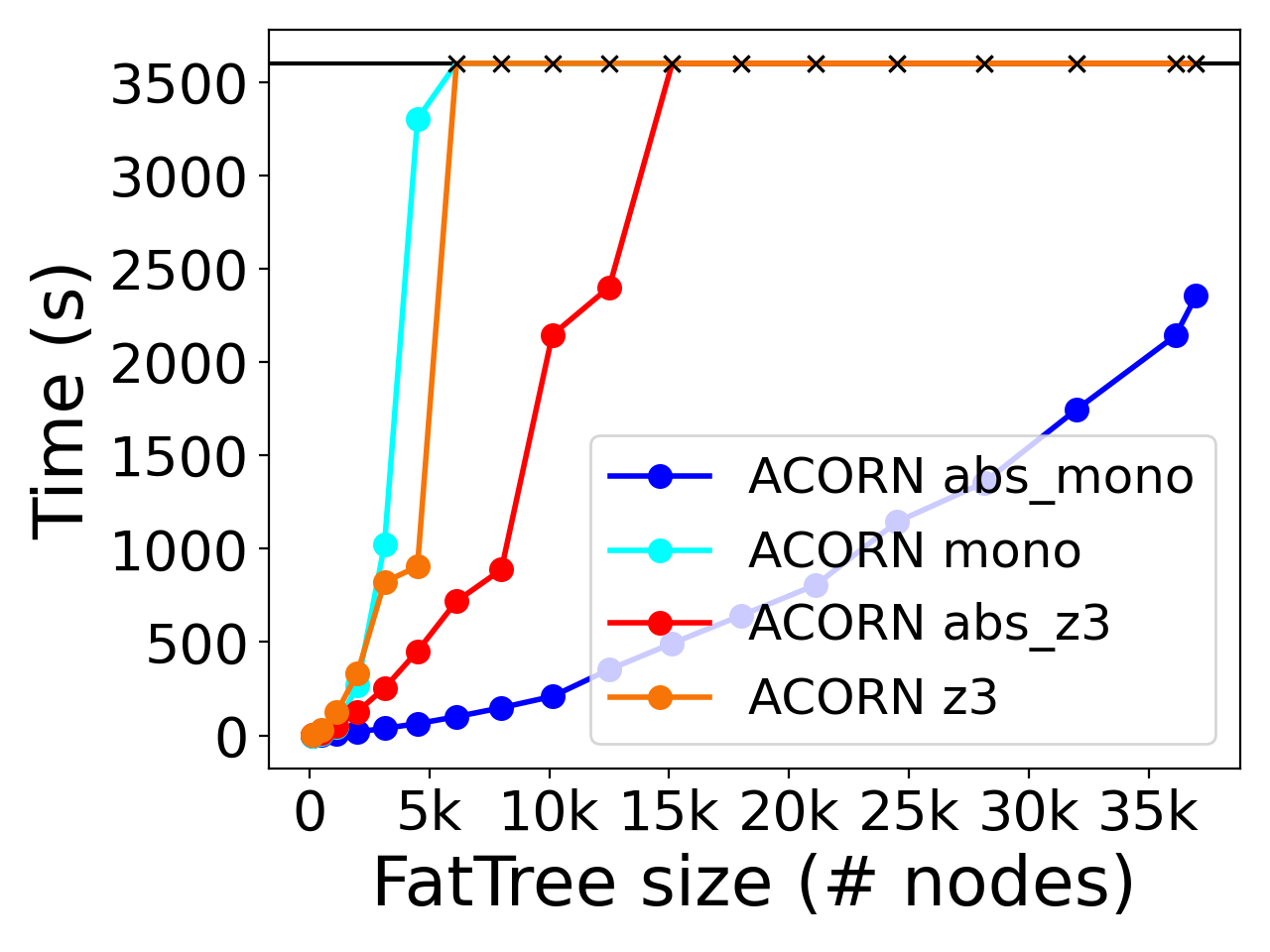}
    \caption{}
    \label{fig:vf_reach}
\end{subfigure}
\begin{subfigure}{0.31\textwidth}
    \centering
    \includegraphics[width=\textwidth]{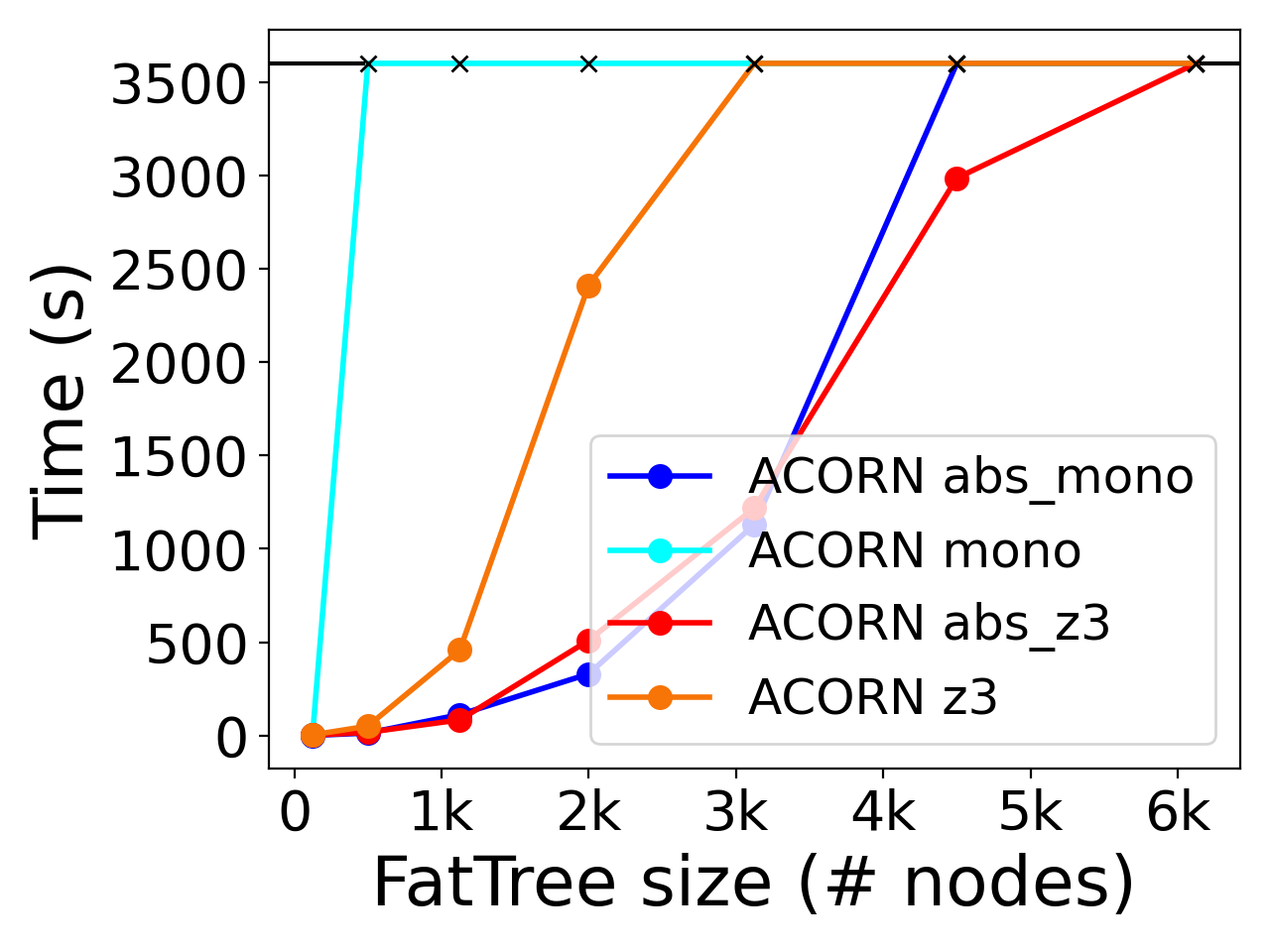}
    \caption{}
    \label{fig:vf_prop}
\end{subfigure}
\begin{subfigure}{0.31\textwidth}
    \centering
    \includegraphics[width=\textwidth]{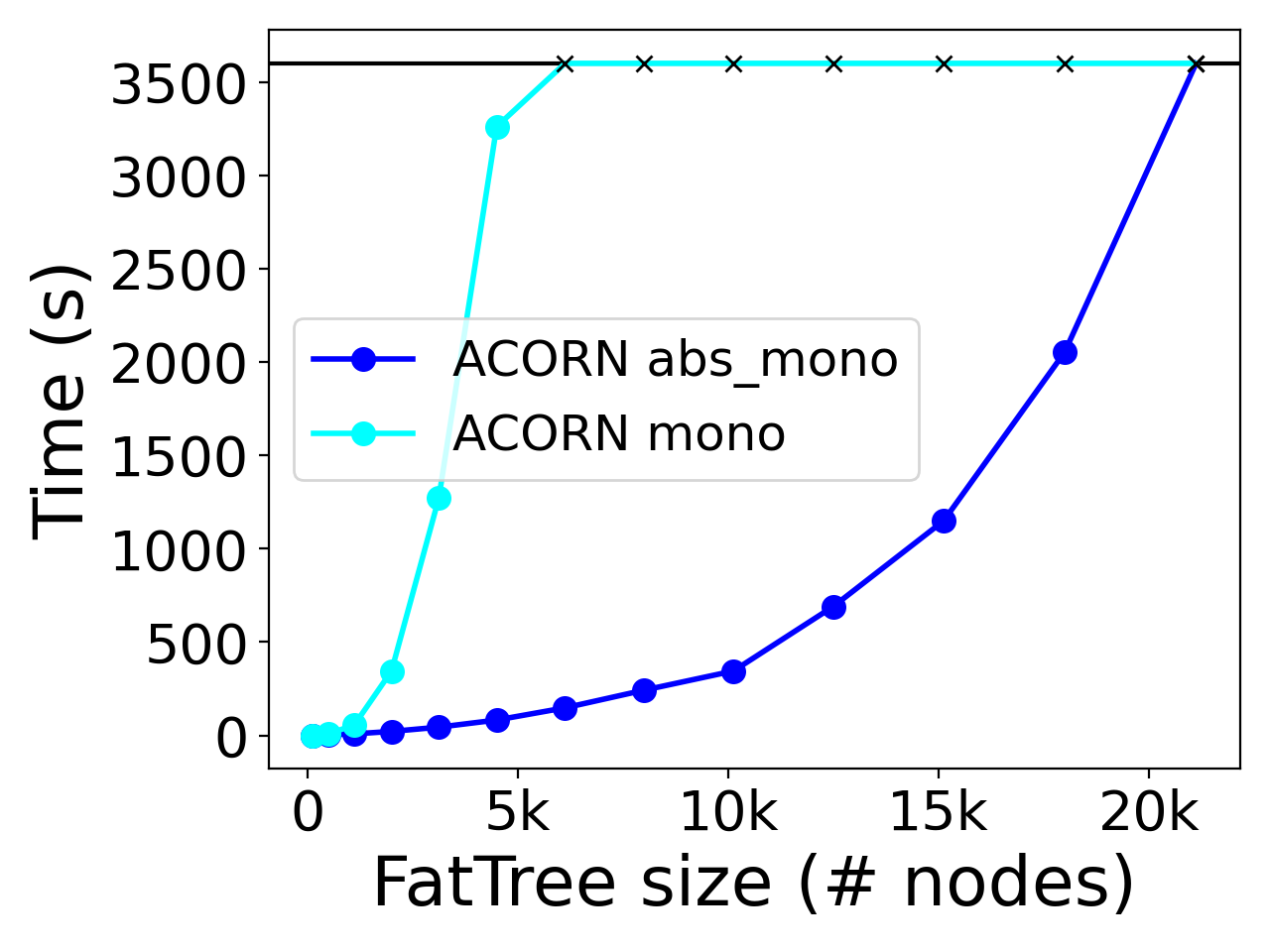}
    \caption{}
    \label{fig:ext_acc_reach}
\end{subfigure}
\begin{subfigure}{0.31\textwidth}
    \centering
    \includegraphics[width=\textwidth]{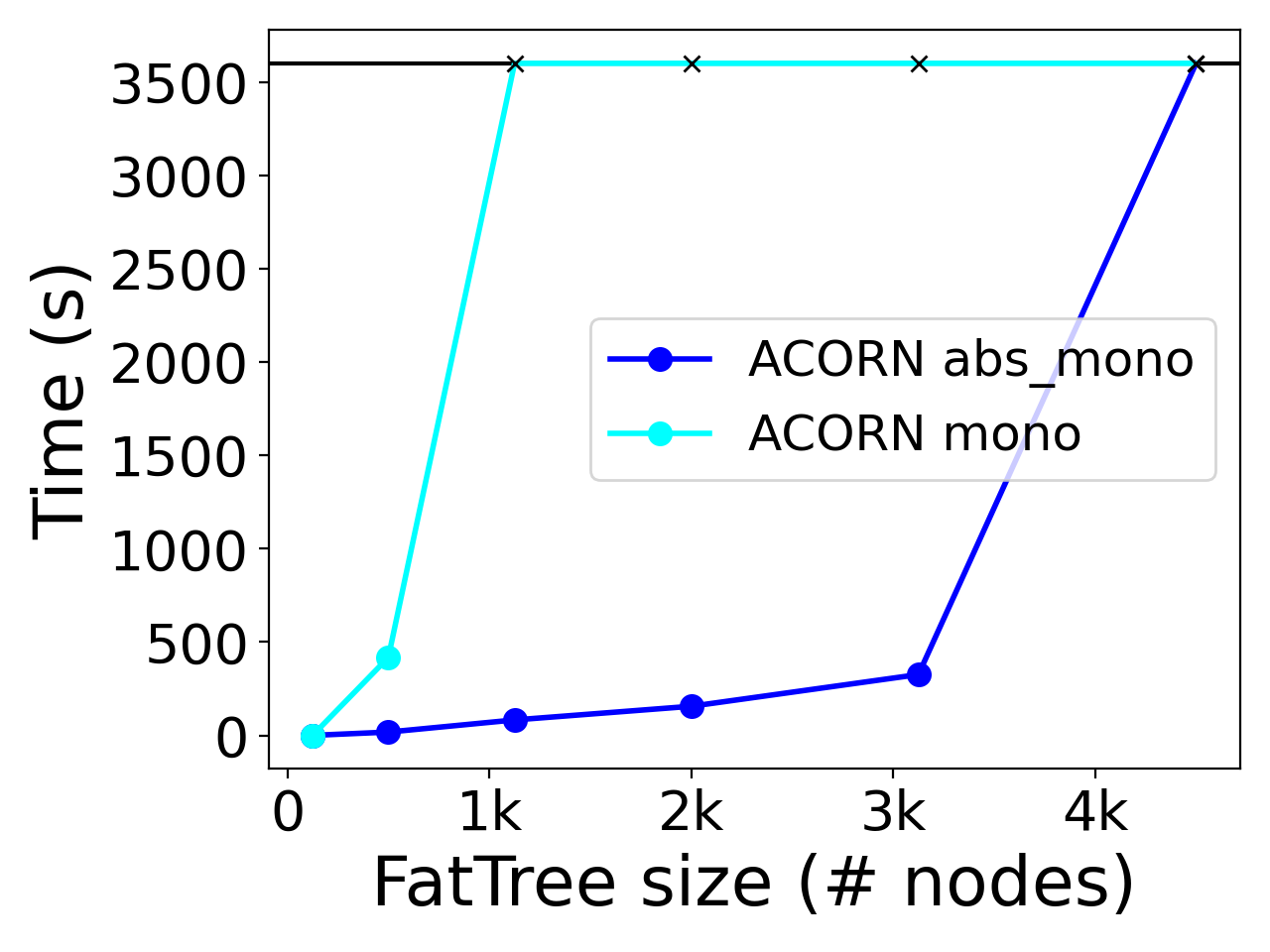}
    \caption{}
    \label{fig:ext_acc_isolation}
\end{subfigure}
\begin{subfigure}{0.31\textwidth}
    \centering
    \includegraphics[width=\textwidth]{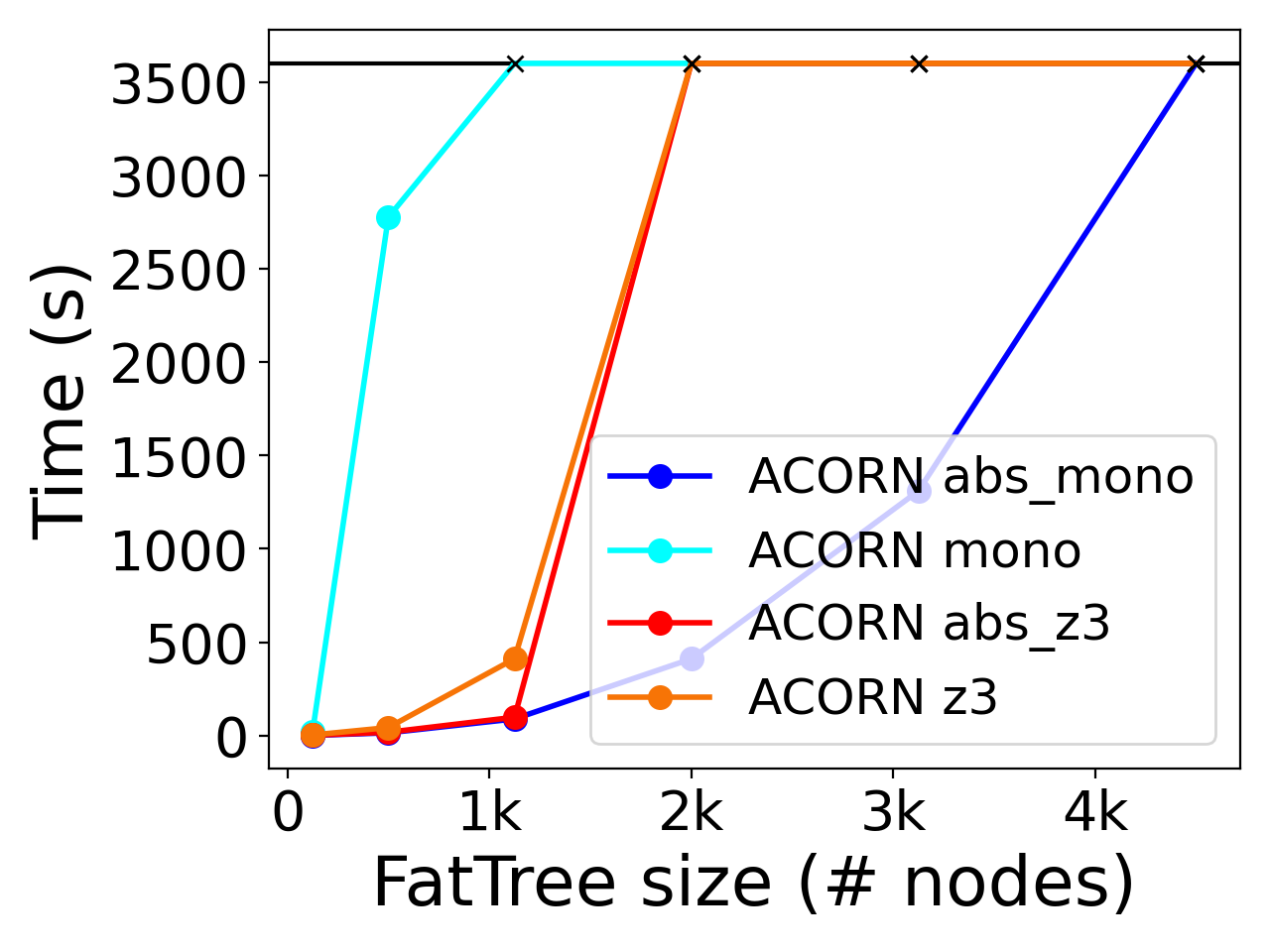}
    \caption{}
    \label{fig:buggy_vf_reach}
\end{subfigure}
    \caption{Results for data center network examples: (a) Reachability with shortest-path routing, (b) Reachability with valley-free policy, (c) Valley-free property, (d) Reachability with isolation policy, (e) Isolation property, and (f) Reachability with a buggy valley-free policy.}
    \label{fig:dc_results}
\end{figure}

\para{Reachability for shortest-path routing and valley-free policies} The results are shown in Figures~\ref{fig:sp_reach}, ~\ref{fig:vf_reach}, and~\ref{fig:ext_acc_reach} for the shortest-path routing policy, valley-free policy, and valley-free policy with isolation respectively. For both solvers, the abstract settings successfully verify reachability without any false positives and scale much better than the respective no-abstraction settings, showing that using the \nca is a clear win. For both policies, abs\_mono scales much better than abs\_z3 (we only ran the MonoSAT settings on the isolation policy examples as our encoding of regular expressions requires graph theory support). Our abs\_mono setting verifies reachability in a FatTree with 36,980 nodes running the valley-free policy in 40 minutes while abs\_z3 times out for a FatTree with 15,125 nodes. This shows the effectiveness of our symbolic graph-based SMT encoding that can leverage graph-based reasoning in MonoSAT for checking reachability, which is an important property that network operators care about. 

\para{Reachability for a buggy valley-free policy} We introduced a bug in the valley-free policy described earlier, where ToR routers in the last pod erroneously drop routes that do not have a community value of $0$. Since routes with a destination in another pod have a community value of $2$, these routes will get dropped. Our tool correctly reports that the destination is unreachable and provides a counterexample. The results are shown in Figure~\ref{fig:buggy_vf_reach}. Our abs\_mono setting can check a network with 3,000 nodes within an hour, while both the no-abstraction settings are worse, and cannot check a network with 2,000 nodes within an hour. With the MonoSAT solver, our abstract setting is up to $190$x faster than the no-abstraction setting, showing that our abstraction is effective even in cases when the network violates a reachability property.

\para{Policy properties}
For the valley-free policy examples, we checked the valley-free policy property (\S\ref{s:prop_encoding}). For the isolation policy examples, we checked isolation between a ToR router and the external router. The results are shown in Figures~\ref{fig:vf_prop} and~\ref{fig:ext_acc_isolation}, respectively. Again, the abstract settings verify both properties without false positives and outperform the respective no-abstraction settings, showing the benefit of using the \nca. For the valley-free policy property, where graph-based reasoning is likely not directly useful, abs\_z3 scales better than abs\_mono and can verify a network with 4,500 nodes within 50 minutes while abs\_mono times out. For both solvers, verifying reachability scales relatively better than verifying the policy-based properties.
\label{eval:dc}

\para{Summary of results} 
Our experiments show that for both solvers and for all properties -- reachability, as well as policy-based properties -- using the \nca gives better performance than using the no-abstraction setting which computes the best route at each node. In these benchmarks, our abstract settings successfully verify all properties without any false positives. The relative performance gains of the \nca are stronger for verifying reachability properties. In particular, with the MonoSAT solver, the \nca can achieve a relative speed-up of 52x for verifying reachability (when verification completes successfully within a 1 hour timeout for both abstract and non-abstract settings). Also, MonoSAT performed better than Z3 by up to 10x, demonstrating that it paid off in such cases to use SMT encodings that leverage graph-based reasoning. 
In terms of network size, for both solvers, the no-abstraction setting times out beyond 4,500 nodes for reachability verification, while the abstract setting scales up to about 37,000 nodes for the shortest-path and valley-free policies, and up to 18,000 nodes for the isolation policy with regular expressions. 

\subsection{Wide Area Networks}
\label{s:eval_wan}
To evaluate \sysname on less regular network topologies than data centers, we also conducted experiments on wide area network benchmarks. These typically have relatively small sizes and are not easily parameterized, unlike data center topologies. 
We considered two sets of benchmarks: (1) networks from Topology Zoo~\cite{topozoo}, which we annotated with business relationships, and (2) example networks based on parts of the Internet that were involved in misconfiguration incidents as reported on BGPStream~\cite{bgpstream}, with business relationships 
provided by the CAIDA AS relationships dataset~\cite{caida}. For both sets of benchmarks, 
we used the BGP policy shown in Figure~\ref{fig:gao_rex_pol}, which implements the Gao-Rexford conditions~\cite{gao-rexford} that guarantee BGP convergence: (1) routes from peers and providers are not exported to other peers and providers, and (2) routes from customers are preferred over routes from peers, which are preferred over routes from providers. We checked two properties: reachability of all nodes to a destination, and the no-transit property (\S\ref{s:prop_encoding}).

\para{Topology Zoo networks} We considered 10 of the larger examples from Topology Zoo~\cite{topozoo}, with sizes ranging from 22 to 79 routers. Since these describe only the topology, we created policies by annotating them with business relationships.
All settings take less than 0.5s for both properties (detailed results are in Appendix~\ref{app:grammar}). Note that due to smaller sizes, the runtimes are much faster than the runtimes for the data center benchmarks. Even here, the abstract settings are up to 3x faster than the respective no-abstraction settings, and successfully verify both properties without any false positives.

\para{Networks from BGPStream} 
We created 10 new benchmarks based on parts of the Internet in which BGPStream~\cite{bgpstream} detected possible BGP hijacking incidents, and used publicly available business relationships (CAIDA AS Relationships dataset~\cite{caida}). (This requires manual effort; we plan to create more benchmarks in future work.) 
The results are shown in Figures~\ref{fig:internet_reach} and~\ref{fig:internet_no_transit}, with the number of nodes (ASes) shown on the x-axis and verification time in seconds shown on the y-axis on a log scale. The abstract settings successfully verified reachability in 6 out of 10 benchmarks and reported false positives in 4 (indicated by triangular markers in Figure~\ref{fig:internet_reach}). Both abstract settings performed better than the respective no-abstraction settings, with relative speedups of up to 323x for MonoSAT and 3x for Z3 (when successful). Both abstract settings successfully verified the no-transit property for all networks, 
with abs\_mono performing much better (by up to 120x) than mono, while abs\_z3 performs better than z3 for some networks, but worse than z3 for others. It is clear from  Figure~\ref{fig:internet_reach} that the \nca significantly improves verification performance for both solvers for checking reachability.

\begin{figure}
\begin{subfigure}{0.3\textwidth}
    \centering
    \includegraphics[width=\linewidth]{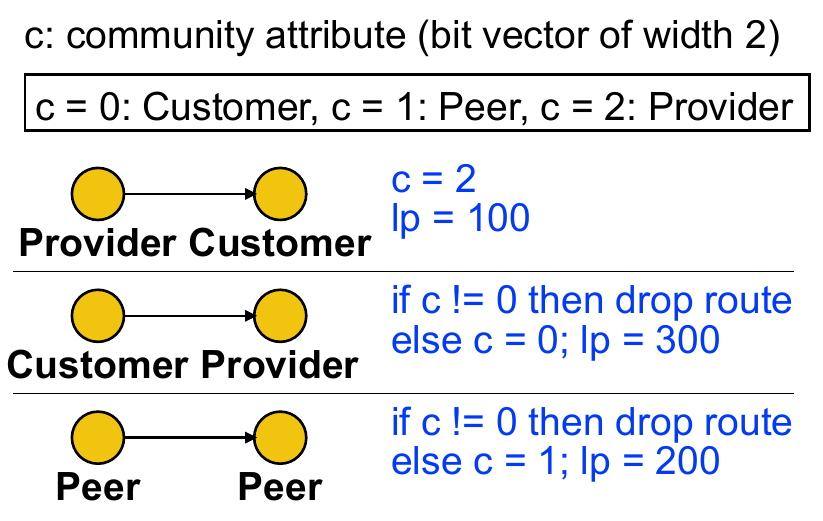}
    \caption{BGP policy}
    \label{fig:gao_rex_pol}
\end{subfigure}
\begin{subfigure}{0.34\textwidth}
        \centering
         \includegraphics[width=\linewidth]{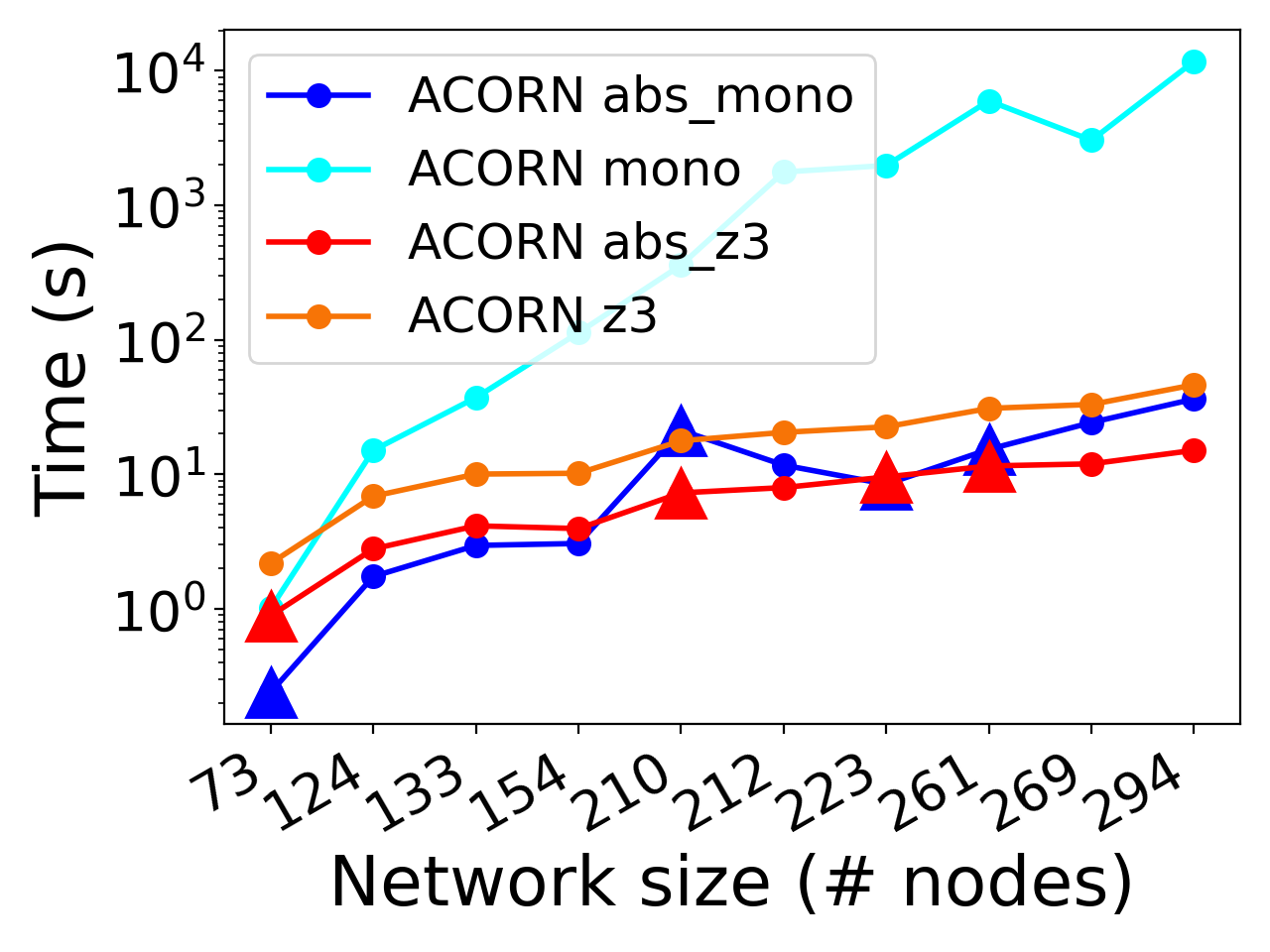}
        \caption{Reachability}
        \label{fig:internet_reach}
\end{subfigure}
\begin{subfigure}{0.34\textwidth}
    \centering
    \includegraphics[width=\linewidth]{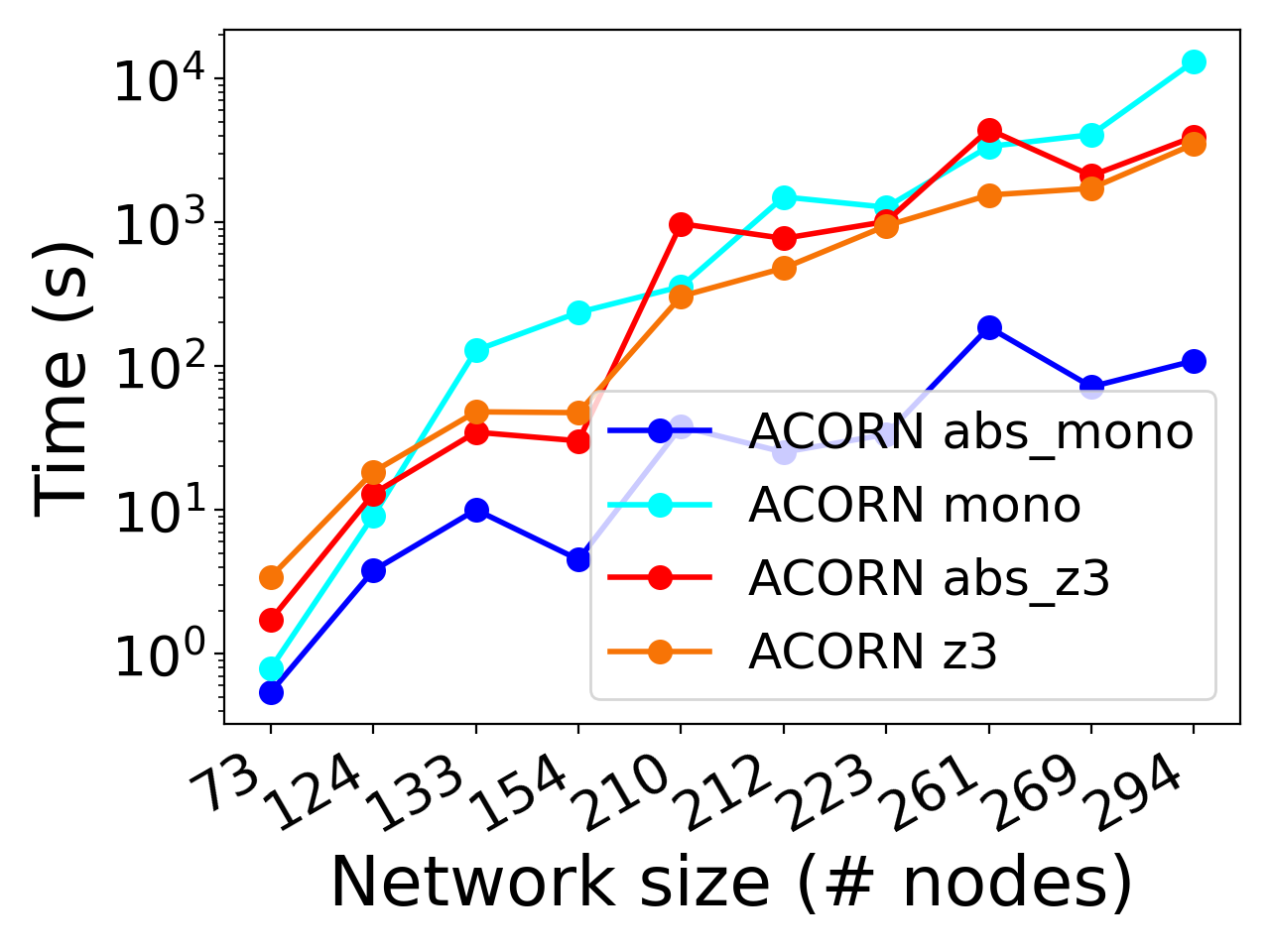}
    \caption{No-transit property}
    \label{fig:internet_no_transit}
\end{subfigure}
 \vspace*{-0.1in}
    \caption{BGP policy and results for wide area networks from BGPStream.}
\end{figure}

\begin{figure}
    \centering
    \begin{subfigure}{0.34\textwidth}
        \centering
         \includegraphics[width=\linewidth]{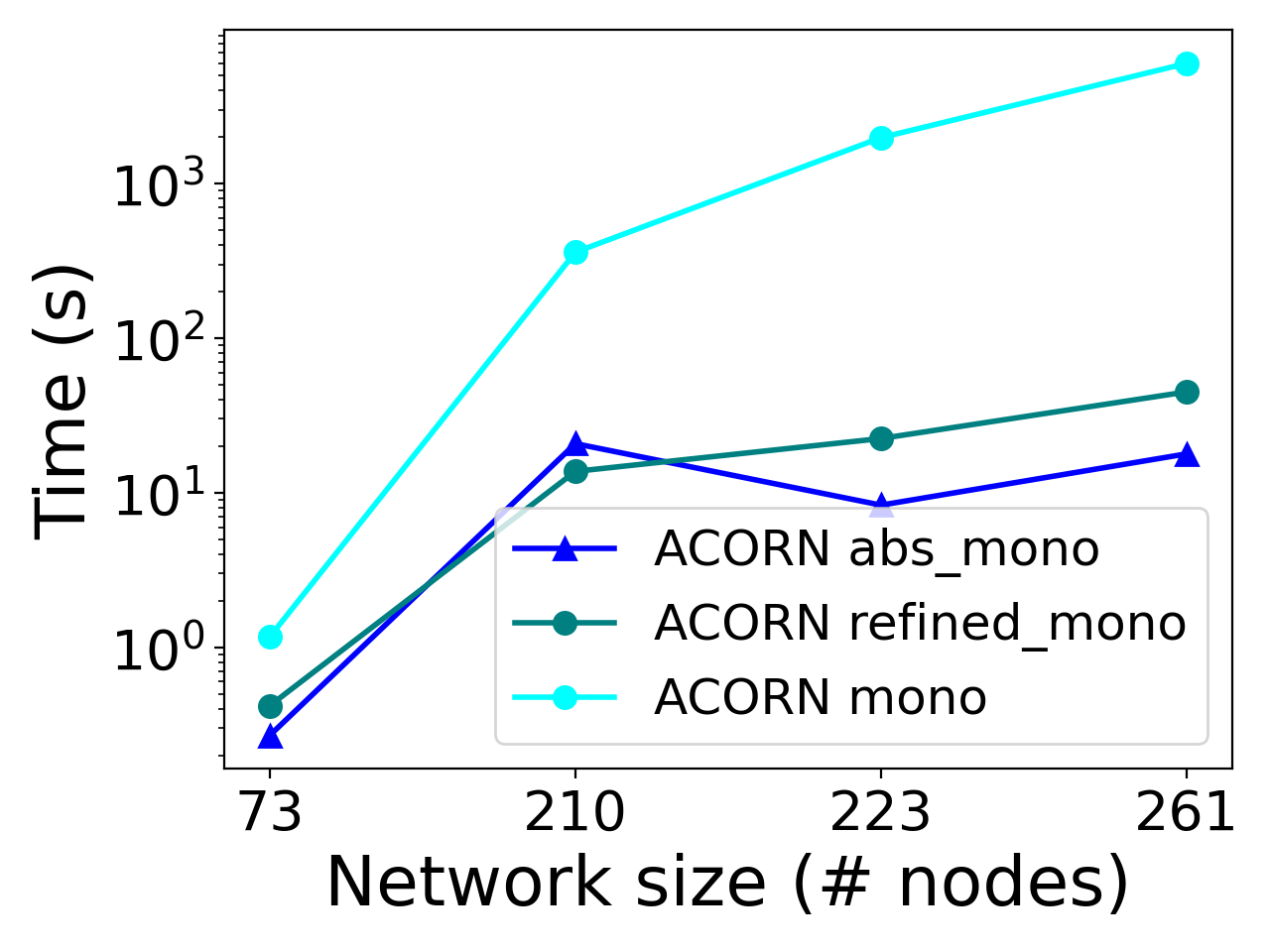}
         \caption{MonoSAT settings}
    \end{subfigure}
    \begin{subfigure}{0.34\textwidth}
        \centering
         \includegraphics[width=\linewidth]{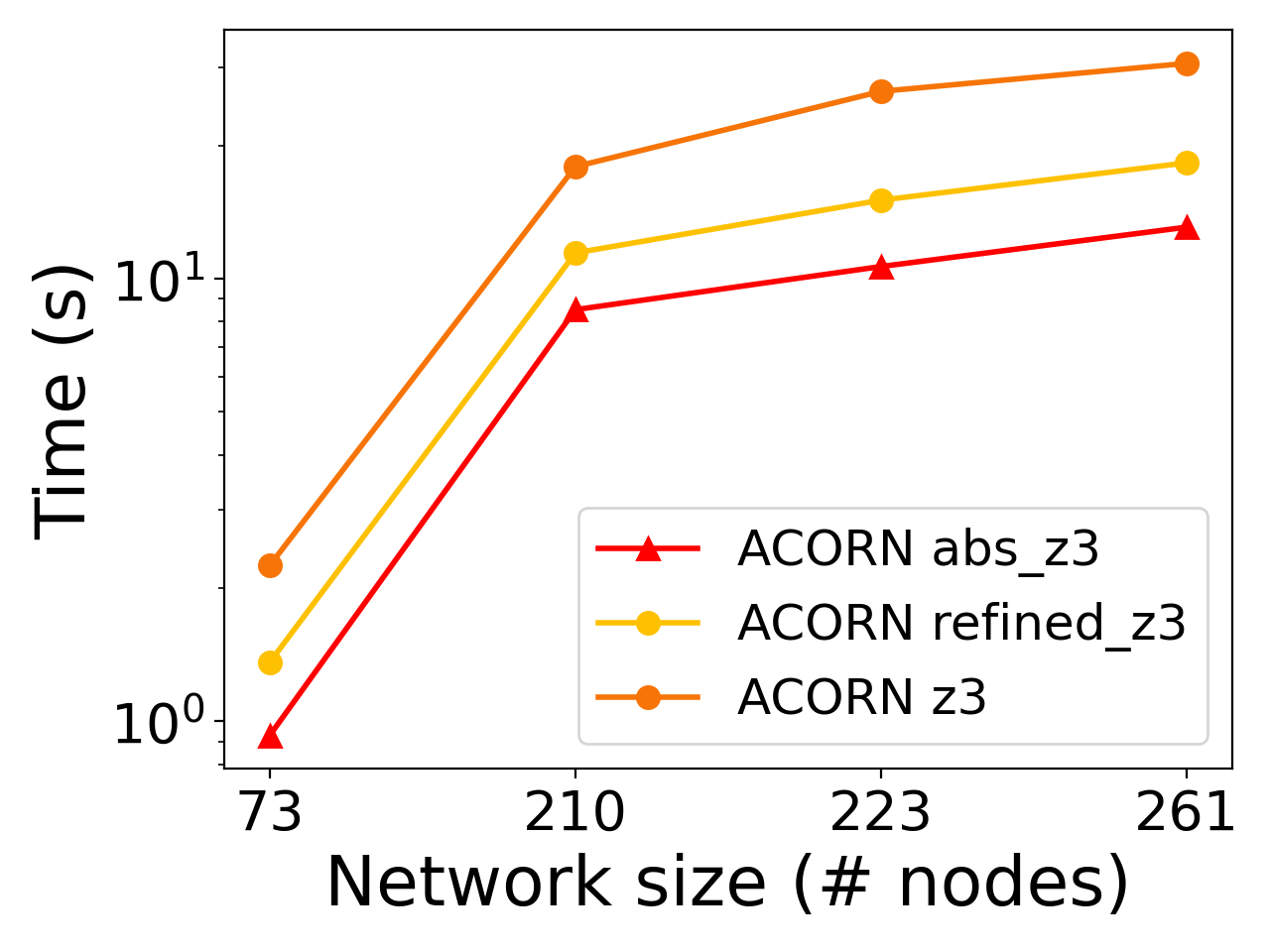}
         \caption{Z3 settings}
    \end{subfigure}
    \caption{Results for refinement of examples with false positives.}
    \label{fig:refinement}
\end{figure}

For the four benchmarks that gave false positives using the  $\prec ^*$ abstraction, we used the $\prec_{(lp)}$ abstraction in which all routers select routes with the highest local preference but without modeling path length and other fields. The results 
for both solvers and all three settings (abstract, refined, and non-abstract) 
are shown in Figure~\ref{fig:refinement}, where the number of nodes is shown on the x-axis, verification time in seconds on the y-axis (log scale) and triangular markers denote false positives. Note that our $\prec_{(lp)}$ abstraction successfully verifies reachability for all four benchmarks, with relative speedups (over no abstraction) of up to 133x for MonoSAT and 1.76x for Z3, and relative slowdowns (over the least precise abstract setting) of up to 2.7x for MonoSAT and 1.5x for Z3. These results demonstrate the precision-cost tradeoff enabled by \sysname. 
% \vspace*{-0.1in}
\subsection{Comparison with Existing Tools}
\label{s:compare}
To place the performance of \sysname in the context of existing tools, we compared it with two publicly available state-of-the-art control plane verifiers: NV~\cite{nvPldi20} and ShapeShifter~\cite{shapeshifter} (other related tools FastPlane~\cite{fastplane} and Hoyan~\cite{hoyan} are not publicly available). 
ShapeShifter uses abstract interpretation~\cite{ai} to abstract routing messages and implements a fast simulator 
that uses BDDs to represent sets of routing messages. 
NV is a functional programming language for modeling and verifying network control planes, and provides a simulator (based on Multi-Terminal BDDs but without abstraction of routing messages) and 
an SMT-based verifier that uses Z3~\cite{z3}. NV's SMT engine has been shown to perform better than Minesweeper~\cite{nvPldi20}.
NV performs a series of front-end transformations to generate an SMT formula, but its encoding is not based on symbolic graphs. Thus, a comparison of our no-abstraction settings against NV\_SMT indicates the effectiveness of our SMT encodings. For a fair comparison, we only report NV's SMT solving time (and ignore the front-end processing time). 

For the evaluations, we used much larger FatTree topologies ($\approx$ 37,000 nodes) than prior work, with the same common policies -- shortest-path and valley-free routing.
For the data center benchmarks (described earlier, \S\ref{s:dc}), we generated corresponding input formats for ShapeShifter and NV (also publicly available~\cite{benchmark-repo}), such that the routing fields in corresponding inputs are the same for all three tools. (We did not use benchmarks from NV’s repository because its input format is different from our tool. Also, the largest benchmark in the NV repository has 2000 nodes, while we wanted to experiment with larger sizes.)

The results for the shortest-path routing and valley-free policies are shown in 
Figure~\ref{fig:comp} where 
the number of nodes is shown on the x-axis, verification time in seconds on the y-axis (log scale),
timeouts indicated by `x', and out-of-memory indicated by `OOM'. (ShapeShifter and NV could not be applied on the isolation benchmarks as they do not support regular expressions over AS paths.)
\begin{figure*}
    \centering
    \begin{subfigure}{0.35\textwidth}
        \centering
        \includegraphics[width=\linewidth]{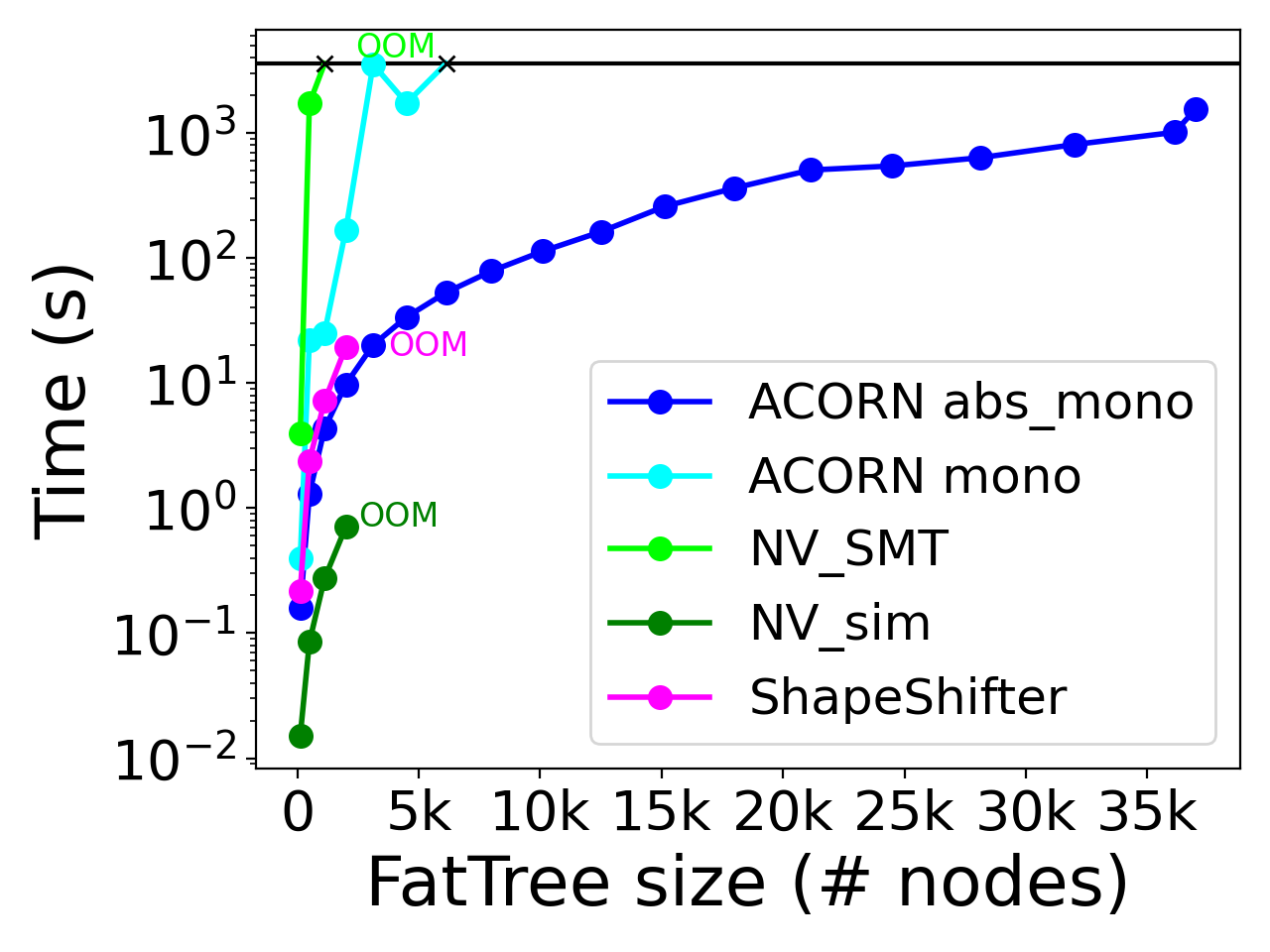}
        \caption{Shortest-path policy}
        \label{fig:comp_sp}
    \end{subfigure}\hspace{0.1\textwidth}
    \begin{subfigure}{0.35\textwidth}
        \centering
         \includegraphics[width=\linewidth]{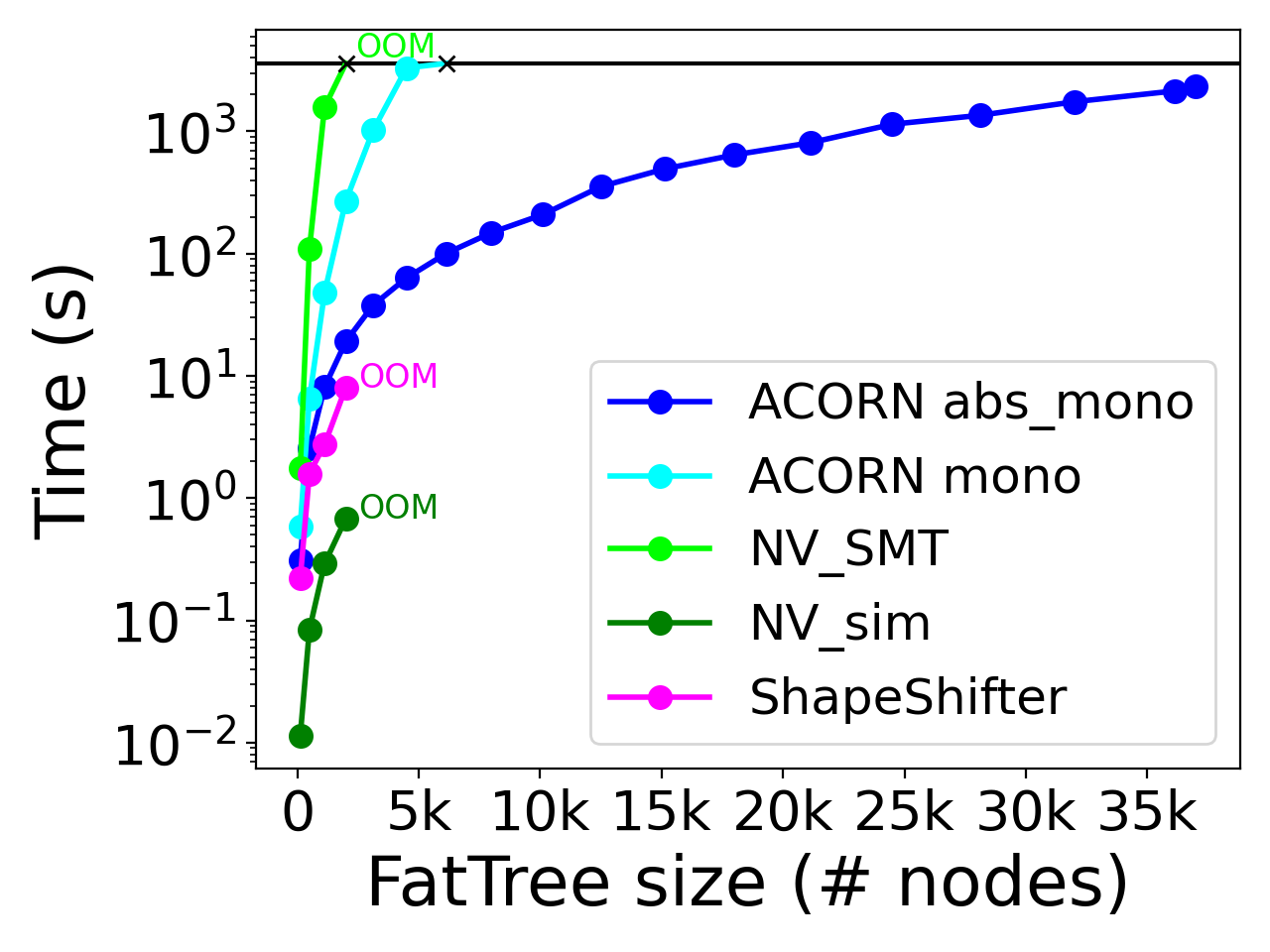}
        \caption{Valley-free policy}
        \label{fig:comp_vf}
    \end{subfigure}
    % \vspace*{-0.1in}
    \caption{Comparison of tools on data center examples
    % \vspace*{-0.2in}
    }
    \label{fig:comp}
\end{figure*}
Note that both NV and ShapeShifter run out of memory for networks with more than 3,000 nodes while \sysname's mono and abs\_mono settings can verify larger networks with 4,500 nodes and 36,980 nodes, respectively. The performance of our abs\_mono setting shows that SMT-based methods for network control plane verification can scale to large networks with tens of thousands of nodes.
        
\subsection{Discussion and Limitations}
\label{s:limitations}
\sysname is sound for properties that hold for all \emph{stable} states of a network, \ie properties of the form $\forall s\ P(s)$ where $s$ is a stable state, such as reachability, policy-based properties, device equivalence, and way-pointing.
Like many SMT-based tools, \sysname cannot verify properties over \emph{transient} states that arise before convergence. For verification to be effective, the selected abstraction should model the fields relevant to the property of interest. For checking reachability, our least precise abstraction works well in practice; to verify a property about the path length between two routers 
a user should 
use an abstraction that models path length (otherwise our verification procedure would give a false positive). We have shown that our abstractions are sound under specified failures; however, our tool does not yet model failures. We plan to extend our SMT encodings to model link/device failures in future work.

\section{Related Work}
\label{s:related}
Our work is related to other efforts in network verification 
and use of nondeterministic abstractions for verification.

\para{Distributed control plane verification} 
These methods~\cite{fsr,bagpipe,era,plankton,minesweeper,nvPldi20} aim to verify all data planes that emerge from the control plane. Simulation-based tools~\cite{batfish,cbgp,fastplane,nvPldi20} work with a concrete environment, i.e., a set of external announcements from neighboring networks. 
Although they are fast and can handle large networks, in practice they can miss errors that are triggered only under certain environments. 

In particular, \textsc{FastPlane}~\cite{fastplane} can scale to large data centers (results shown for $\approx$2000 nodes) by applying a generalized shortest-path algorithm to simulate BGP route selection and propagation. However, it 
requires the network policy to be monotonic, \ie a route announcement's preference decreases on traversing any edge in the network. Our approach does not require the network policy to be monotonic.

Another recent work called \textsc{Hoyan}~\cite{hoyan} uses a hybrid simulation and SMT-based approach. It too keeps track of multiple route announcements received at each router to check reachability under failures, but in the context of the given simulation. Although it does not consider all possible environments, it has been deployed in a real-world WAN (with $O(100)$ nodes) and offers additional  capabilities for finding inconsistencies in network models due to vendor-specific behaviors of devices.

The ShapeShifter~\cite{shapeshifter} work is the closest to ours in terms of route abstractions, but it does not use SMT-based verification and does not scale as well as our tool (\S\ref{s:compare}). 
We allow all routing choices at a node, while ShapeShifter uses a conservative abstraction of the best choice for soundness. 
Allowing all routing choices provides more flexibility in tracking correlated choices across different nodes, much as SMT-based program verification allows path-sensitivity for more precision, in comparison to path-insensitive static analysis. 
As an example, ShapeShifter's ternary abstraction for community tags (which abstracts each community tag bit to $\{0, 1, *\}$) would result in a false positive on Example 2 in Figure~\ref{fig:diamond_example} (\S\ref{s:motiv}), while \sysname verifies it correctly. 

An earlier work, Bagpipe~\cite{bagpipe}, verifies BGP policies using symbolic execution and can verify properties over stable as well as transient states of a network. It uses a simplified BGP route selection procedure that chooses routes with maximum local preference, but does not break ties on other fields such as path length, choosing a route nondeterministically in that case. This is similar to our NRC abstraction using partial order $\prec_{(lp)}$. However, our abstraction hierarchy is more general and can be applied to any routing protocol. 

Other tools have proposed different abstractions and optimizations. 
ARC~\cite{arc} proposed a graph-based abstraction 
and uses graph algorithms to check properties such as reachability and fault tolerance, 
but it does not support protocol features such as local preferences or community tags. Its abstraction has been extended to support quantitative properties in QARC~\cite{qarc}, but with similarly restricted support for protocol features. Tiramisu~\cite{tiramisu} uses a similar graph-based representation as ARC, but with multiple layers to capture inter-protocol dependencies. It was shown to scale better than many state-of-the-art verifiers; however, results were shown only on relatively small networks with up to a few hundred devices. 
Plankton~\cite{plankton} uses explicit-state model checking to exhaustively explore all possible converged states of the control plane and implements several optimizations to reduce the size of the search space. Bonsai~\cite{bonsai} proposed a symmetry-based abstraction to compress the network control plane. However, even when a network topology is symmetric, the network policy could break symmetry, making Bonsai's compression technique less effective. In comparison, our \nrcs do not rely on symmetry. 

There has been some recent work~\cite{kirigami,timepiece,lightyear} on using modular verification techniques to improve the scalability of verification. The core ideas in modular verification are orthogonal to our work, and could potentially be combined with abstractions. Among these efforts, \textsc{Lightyear}\cite{lightyear} also verifies BGP policies using an over-approximation that allows routers to choose any received route -- this corresponds to our NRC abstraction with partial order $\prec^*$. However, unlike our approach, it requires a user to provide suitable invariants.

As far as we know, none of these prior efforts have been shown to verify networks with more than 4,500 devices.

\para{Data plane verification} These efforts~\cite{hsa,anteater,flowchecker,veriflow,netplumber,netkat,zhangAtva13,nod} model the data forwarding rules and check properties such as reachability, absence of routing loops, absence of black holes, etc. Although there are differences in coverage of various network design features, properties, and techniques (symbolic simulation, model checking, SAT/SMT-based queries), many such methods have been shown to successfully handle the scale and complexity of real-world networks. 
Similar to these methods, our least precise abstraction does not model the route selection procedure, 
but we verify all data planes that emerge from the control plane, not just one snapshot.

\para{Nondeterminism and abstractions}
Nondeterminism has been used to abstract behavior in many different settings in software and hardware verification. Examples include control flow nondeterminism 
in Boolean program abstractions in SLAM~\cite{slam}, a sequentialization technique~\cite{LalRepsCav08} that converts control nondeterminism (\ie interleavings in a concurrent program) to data nondeterminism, and 
a localization abstraction~\cite{KurshanLocal} in hardware designs. Our \nrcs
use route nondeterminism to soundly abstract network control plane behavior.

\section{Conclusions and Future Directions}
\label{s:concl}
The main motivation for our work is to provide full symbolic verification of network control planes that can scale to large networks. Our approach is centered around two core contributions:
a hierarchy of nondeterministic abstractions, and a new SMT encoding that can leverage specialized SMT solvers with graph theory support. Our tool, \sysname, 
has verified reachability (an important property for network operators) on data center benchmarks (with FatTree topologies and commonly used policies) with $\approx$37,000 routers, which far exceeds what has been shown by existing related tools. 
Our 
evaluation shows that our abstraction performs \emph{uniformly better} than no abstraction for verifying reachability on different network topologies and policies, and with two different SMT solvers.
In future work, we plan to consider verification under failures,
and combine our abstractions with techniques based on modular verification of network control planes.

\bibliographystyle{splncs04}
\bibliography{references}

\appendix
\section{BGP Overview}
\label{app:bgp}
BGP is the protocol used for routing between \emph{autonomous systems} (ASes) in the Internet.
An autonomous system (AS) is a network controlled by a single administrative entity \eg the network of an Internet Service Provider (ISP) in a particular country, or a college campus network. A simplified version of the decision process used to select best routes in BGP is shown in Table~\ref{tab:bgp_decision_process}~\cite{bgp_policies}. A router compares two route announcements by comparing the attributes in each row of the table, starting from the first row. A route announcement with higher local preference is preferred, regardless of the values of other attributes; if two route announcements have equal local preference, then their path lengths will be compared. BGP allows routes to be associated with additional state via the community attribute, a list of string tags. Decisions can be taken based on the tags present in a route announcement; for example, a route announcement containing a particular tag can be dropped or the route preference can be altered (e.g., by increasing the local preference if a particular tag is present).

\begin{table}[]
    \centering
    \begin{tabular}{|p{0.7cm}|p{2.5cm}|p{4cm}|p{2.5cm}|}
    \hline
        Step & Attribute & Description & Preference (Lower/Higher)  \\
        \hline
        1 & Local preference & An integer set locally and not propagated & Higher \\
        \hline
        2 & AS path length & The number of ASes the route has passed through & Lower \\
        \hline
        3 & Multi-exit Discriminator (MED) & An integer used to influence which link (among many) should be used between two ASes & Lower\\
        \hline
        4 & Router ID & Unique identifier for a router used for tie breaking & Lower\\
        \hline
    \end{tabular}
    \vspace*{0.1in}
    \caption{Simplified BGP decision process to select the best route~\cite{bgp_policies}}
    \label{tab:bgp_decision_process}
\end{table}
\newpage
\section{Proof of soundness of the \nrcs}
\label{app:proofs}

\overapproxlemma*
\begin{proof}
We need to show that for each labeling $\mathcal{L}$, if $\mathcal{L} \in Sol(S)$ then $\mathcal{L} \in Sol(\wh{S}_{\prec '})$. An SRP solution $\mathcal{L}$ is defined by
\begin{align*}
    \mathcal{L}(u) = \begin{cases}
    \mathnormal{a_d} & \text{ if } u = d\\
    \infty & \text{ if } \mathrm{attrs}_{\mathcal{L}}(u) = \emptyset\\
    a \in \mathrm{attrs}_{\mathcal{L}}(u)
    \text{ , minimal by } \prec & \text{ if } \mathrm{attrs}_{\mathcal{L}}(u) \neq \emptyset
    \end{cases}
\end{align*}
where $\mathrm{attrs}_{\mathcal{L}}(u)$ is the set of attributes that $u$ receives from its neighbors. The abstract SRP $\wh{S}_{\prec '}$ differs from the SRP $S$ only in the partial order. Therefore, to show that $\mathcal{L}$ is a solution of $\wh{S}_{\prec '}$, we need to show that if $\mathrm{attrs}_{\mathcal{L}}(u) \neq \emptyset$, then $\mathcal{L}(u)$ is minimal by $\prec '$. By the definition of an abstract SRP, the set of minimal attributes according to $\prec'$ is a superset of the set of minimal attributes according to $\prec$, which means $\mathcal{L}(u)$ is minimal by $\prec '$. Therefore, any SRP solution $\mathcal{L}$ is a solution of the abstract SRP $\wh{S}_{\prec '}$.
\end{proof}

\soundnessthm*
\begin{proof}
If $\wh{N} \land \neg P$ is unsatisfiable, every solution of the abstract SRP satisfies the given property. By Lemma~\ref{lemma:overapprox}, the property also holds for all solutions of the concrete SRP $S$, \ie there is no property violation in the real network.
\end{proof}

\newpage
\section{Transfer Constraints in Abstract SRP}
\label{app:example_transfer_constraints}

\begin{figure}[b]
    \centering
    \includegraphics[width=0.5\linewidth]{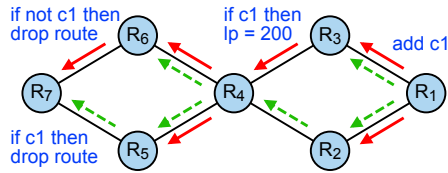}
    \caption{Example 2}
    \label{fig:diamond_example_repeated}
\end{figure}

The attribute transfer and route filtering constraints for the example network shown in Figure~\ref{fig:diamond_example} (reproduced in Figure~\ref{fig:diamond_example_repeated}) are given below. We use a bit vector variable $comm_{u}$ to denote the community attribute at node $R_u$, and $re_{uv}$ to denote the routing edge variable for edge $(R_u, R_v)$. We encode the community tag $c1$ using the value $1$, and use the value $0$ to represent the initial community value at the destination.

\para{Initial route at destination} We set the community to a default value of $0$ at the destination router $R_1$ using the constraint  $comm_{1} = 0$.

\para{Transfer constraints along edge $(R_1, R_3)$} The transfer function along edge $(R_1, R_3)$ updates the community attribute. The route is never dropped along this edge, so the placeholder $routeDropped_{13}$ is False, which makes  equation~\eqref{eq:dropped} in Figure~\ref{fig:encoding} trivially hold. This is encoded using the constraints
\begin{align}
    &re_{13} \rightarrow comm_3 = 1
    \label{eq:trans_13}\\
    &re_{13} \rightarrow \neg routeDropped_{13}\\
    &routeDropped_{13} \leftrightarrow False
\end{align}
Our implementation simplifies formulas wherever possible when $routeDropped$ is a constant, and only asserts equation~\eqref{eq:trans_13} above.

\para{Transfer constraints along edges $(R_5, R_7)$ and $(R_6, R_7)$} The transfer functions along these edges propagate the community attribute and filter routes based on the community. This is encoded using the constraints
\begin{align}
    &re_{57} \rightarrow comm_7 = comm_5\\
    &re_{57} \rightarrow \neg routeDropped_{57}\\
    &routeDropped_{57} \leftrightarrow (comm_5 = 1)
\end{align}
\begin{align}
    &re_{67} \rightarrow comm_7 = comm_6\\
    &re_{67} \rightarrow \neg routeDropped_{67}\\
    &routeDropped_{67} \leftrightarrow (comm_6 \neq 1)
\end{align}

\para{Transfer constraints along other edges} The transfer functions for all other edges propagate the community attribute and do not filter out routes. They are encoded using the constraints
\begin{align}
    &re_{vu} \rightarrow comm_u = comm_v \label{eq:trans_vu}\\
    &re_{vu} \rightarrow \neg routeDropped_{vu} \label{eq:drop_vu}\\
    &routeDropped_{vu} \leftrightarrow False
\end{align}
Since $routeDropped$ is constant for these edges, our implementation would simplify the formulas by substituting the value of $routeDropped$, and would only assert equation~\eqref{eq:trans_vu} above, as equation~\eqref{eq:drop_vu} is trivially true.
\section{\sysname Intermediate Representation (IR) and Benchmark Examples}
\label{app:grammar}
\para{Intermediate Representation (IR)} Our tool implementation takes as input policies written in a simple IR language for the SRP model. The transfer function is represented as a list of match-action rules, similar to route-maps in Cisco's configuration language. We support matching on the community attribute and some types of regular expressions over the AS path. Our implementation currently supports regular expressions that check whether the path contains certain ASes or a particular sequence of ASes, and could be extended to support general regular expressions in the future. A match can be associated with multiple actions, which can update route announcement fields such as the community attribute, local preference, and AS path length.

\para{Benchmark examples} The details of the wide area network examples we used (\S\ref{s:eval_wan}) are described below.\\
\noindent
\emph{Topology Zoo benchmarks.} We used 10 topologies from the Topology Zoo~\cite{topozoo}, which we pre-processed \eg by removing duplicate nodes and nodes with id ``None". The names and sizes of the resulting topologies are shown in Table~\ref{tab:tz_examples}. We annotated the topologies with business relationships between ASes, considering each node as an AS, and used a BGP policy that implements the Gao-Rexford conditions~\cite{gao-rexford}. The annotated benchmark files (in GML format) are included in our benchmark repository~\cite{benchmark-repo}, along with the examples in our IR format.\\
\noindent
\begin{table}[]
    \centering
    \begin{tabular}{|c|c|c|}
    \hline
        Benchmark & Topology name & Size \\
    \hline
        TZ1 & VinaREN & 22 nodes, 24 edges\\
        TZ2 & FCCN & 23 nodes, 25 edges\\
        TZ3 & GTS Hungary & 27 nodes, 28 edges\\
        TZ4 & GTS Slovakia & 32 nodes, 34 edges\\
        TZ5 & GRnet & 36 nodes, 41 edges\\
        TZ6 & RoEduNet & 41 nodes, 45 edges\\
        TZ7 & LITNET & 42 nodes, 42 edges\\
        TZ8 & Bell South & 47 nodes, 62 edges\\
        TZ9 & Tecove & 70 nodes, 70 edges\\
        TZ10 & ULAKNET & 79 nodes, 79 edges\\
    \hline
    \end{tabular}
    \caption{Topology Zoo examples}
    \label{tab:tz_examples}
\end{table}
\noindent
\emph{BGPStream benchmarks.} We created a set of 10 examples based on parts of the Internet involved in BGP hijacking incidents, as reported on BGPStream~\cite{bgpstream}. For a given BGP hijacking incident, we created a network with the ASes involved and used the CAIDA AS Relationships dataset~\cite{caida} to add edges between ASes with the given business relationships (customer-provider or peer-peer). We then removed some additional ASes (if required) so that our no-abstraction setting could verify that all ASes in the resulting network can reach the destination (taken to be the possibly hijacked AS). We used a BGP policy that implements the Gao-Rexford conditions~\cite{gao-rexford} based on the given business relationships between ASes. The details of these examples are shown in Table~\ref{tab:bgpstream_examples}.

\begin{table*}[]
    \centering
    \begin{tabular}{|c|c|c|}
    \hline
    Benchmark & Incident date & Size \\
    \hline
    B1 & 2021-06-14 & 261 nodes, 3325 edges\\
    B2 & 2021-06-17 & 223 nodes, 2722 edges\\
    B3 & 2021-06-18 & 133 nodes, 1205 edges\\
    B4 & 2021-06-19 & 210 nodes, 2100 edges\\
    B5 & 2021-06-21 & 269 nodes, 3351 edges\\
    B6 & 2021-06-22 & 212 nodes, 2233 edges\\
    B7 & 2021-06-22 & 294 nodes, 4108 edges\\
    B8 & 2021-06-22 & 124 nodes, 860 edges\\
    B9 & 2021-06-22 & 73 nodes, 270 edges\\
    B10 & 2021-06-25 & 154 nodes, 1176 edges\\
    \hline
    \end{tabular}
    \caption{BGPStream examples}
    \label{tab:bgpstream_examples}
\end{table*}

\para{Results for Topology Zoo examples} Detailed results for the Topology Zoo benchmark examples are shown in Figure~\ref{fig:topo_zoo_results}. All settings take less than 0.5s to verify each property.

\begin{figure*}
   \centering
\begin{subfigure}{0.45\textwidth}
        \centering
        \includegraphics[width=\linewidth]{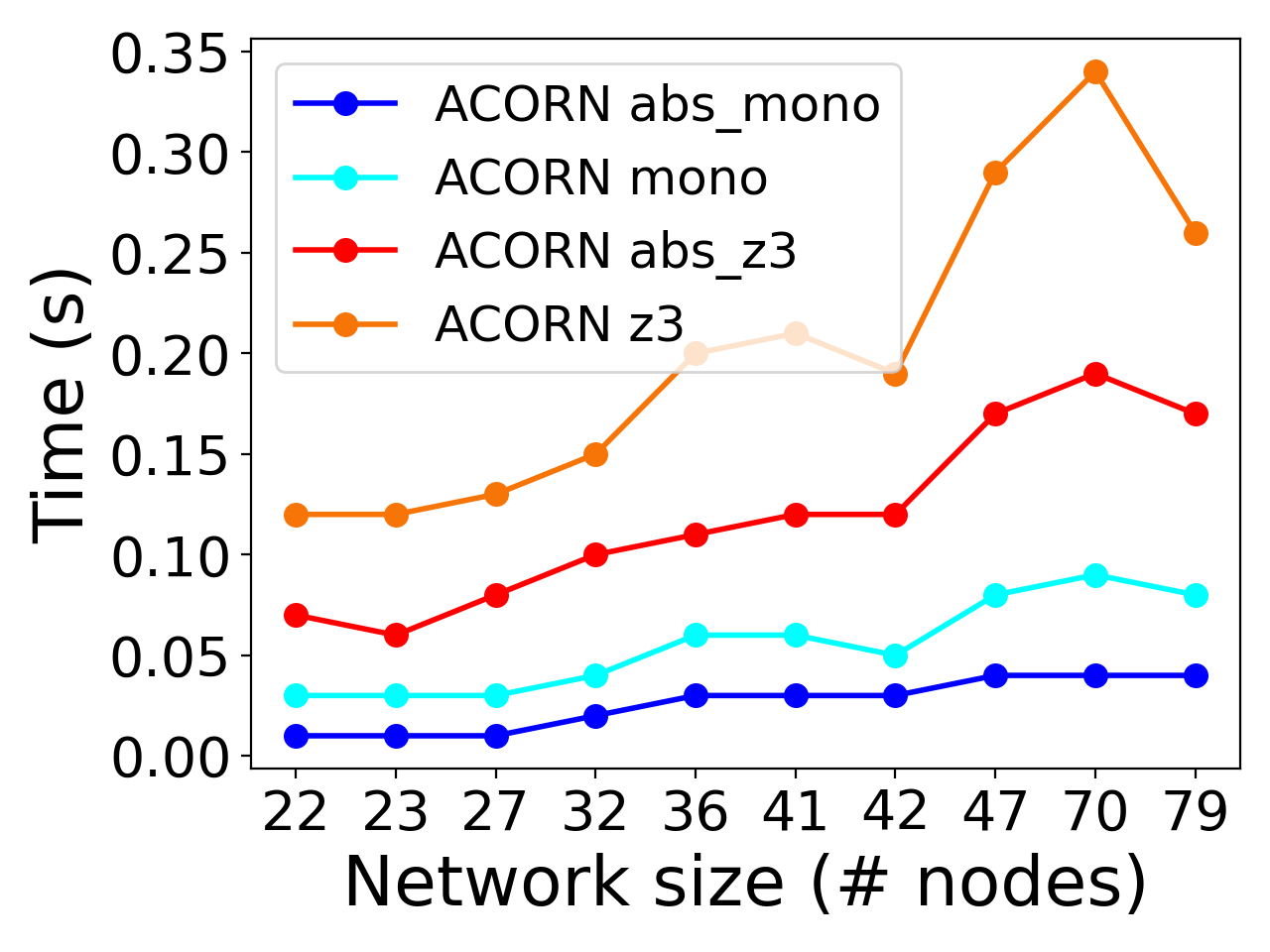}
        \caption{Reachability}
        \label{fig:topo_zoo_reach}
\end{subfigure}
\begin{subfigure}{0.45\textwidth}
    \centering
    \includegraphics[width=\linewidth]{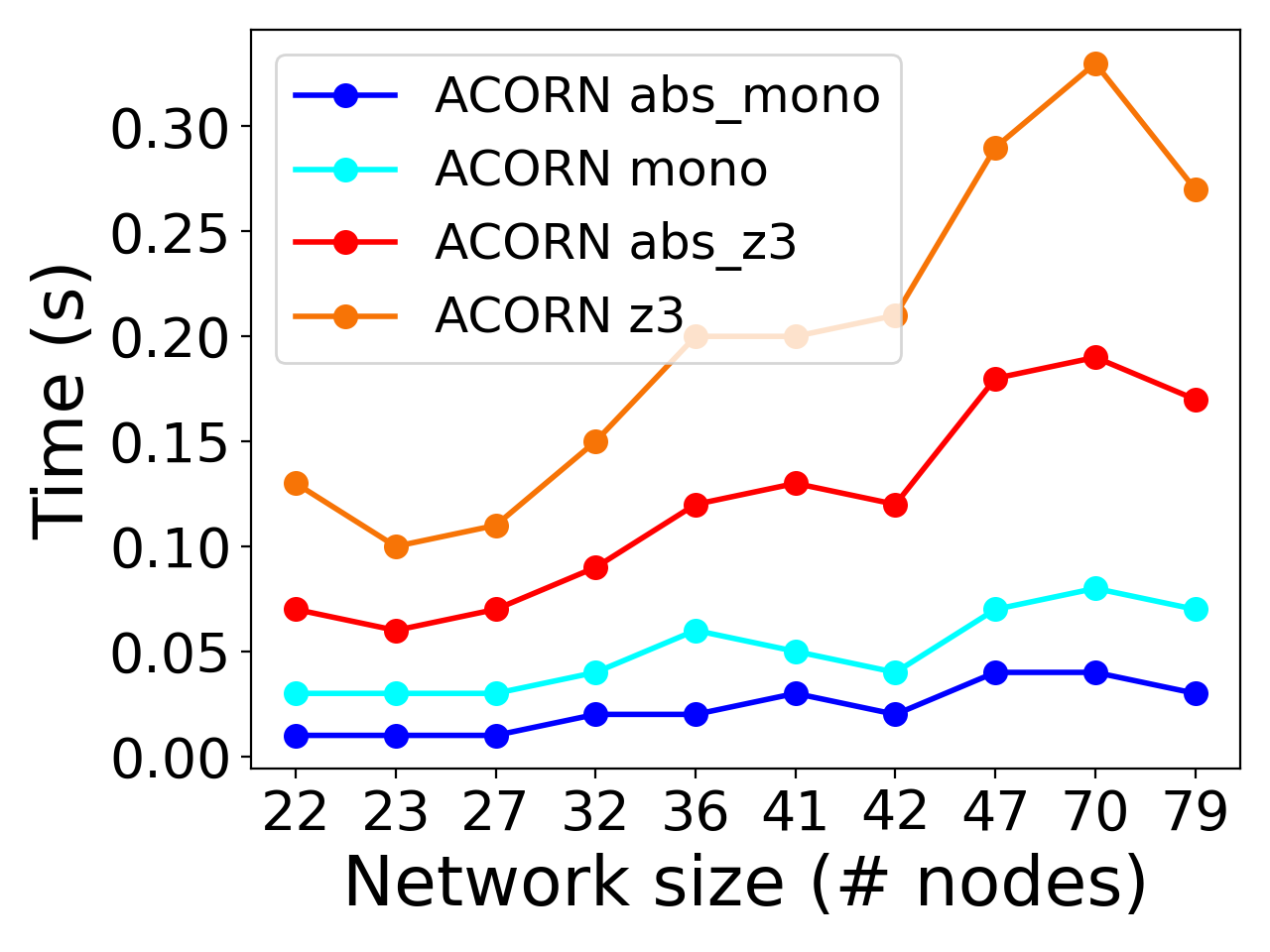}
    \caption{No-transit property}
    \label{fig:topo_zoo_no_transit}
\end{subfigure}
\caption{Results for Topology Zoo examples}
\label{fig:topo_zoo_results}
\end{figure*}

\section{SMT Constraints for Concrete SRP}
\label{app:concreteSmt}
To encode a concrete SRP we add constraints to ensure that each node picks the best route, in addition to the constraints used to encode an abstract SRP (Figure~\ref{fig:encoding}). We ensure that for every edge $(v, u) \in E$, if $u$ selects the route from $v$ (if $nChoice_u = nID(u, v)$) then $v$'s route must be the best route that $u$ receives from its neighbors. This will require us to keep track of the attribute fields used in the route selection procedure (such as path length) and possibly additional temporary variables to track the minimum or maximum value of an attribute. We illustrate how our abstract formulation can be extended to a concrete formulation by showing the constraints required to model the first two steps in BGP's route selection procedure.

\begin{example}[Encoding route selection in BGP]
To encode the first two steps in BGP's route selection procedure we need to keep track of two additional attributes: local preference (denoted $lp$) and the AS path length (denoted $path$), and encode transfer constraints for these attributes (such as constraints which increment the path length along an edge). We then add constraints that model the route selection procedure.
\end{example}
We introduce one variable per edge $(v, u)$, $trans\_lp_{vu}$, which denotes the local preference of the route announcement sent from $v$ to $u$ after applying the transfer function along the edge. We also introduce two variables for each node: $maxLp$, which tracks the maximum $lp$ of routes received from its neighbors; and $minPath$, which tracks the minimum path length among routes with the maximum local preference.
The variable $maxLp_u$ is defined by the constraints
\begin{align*}
    &\bigwedge_{(v, u) \in E} nValid_{vu} \rightarrow maxLp_u \geq trans\_lp_{vu}\\
    &nChoice_u \neq None_u \rightarrow
    \bigvee_{(v, u) \in E} nValid_{vu} \land maxLp_u = trans\_lp_{vu}\\
\end{align*}
where $ nValid_{vu} \leftrightarrow (hasRoute_v \land \neg routeDropped_{vu})$ indicates whether $v$ sends a route to $u$. To set the minimum path length, we use similar constraints:
\begin{align*}
    &\bigwedge_{(v, u) \in E} (nValid_{vu} \land trans\_lp_{vu} = maxLp_u) \rightarrow  minPath_u \leq path_v\\
    &nChoice_u \neq None_u \rightarrow\\
    &\bigvee_{(v, u) \in E} nValid_{vu} \land trans\_lp_{vu} = maxLp_u \land minPath_u = path_v\\
\end{align*}
We now add constraints on the $nChoice$ variables at each node to ensure that if $u$ chooses a route from any neighbor $v$, $v$'s route must be the best.
\begin{align*}
    nChoice_u = nID(u, v) \rightarrow trans\_lp_{vu} = maxLp_u \land path_v = minPath_u
\end{align*}

\end{document}